  \newcommand{\bDoNotIncludePackages}{0}
  \newcommand{\bSkipDocumentSetting}{0}
  \newcommand{\bDoNotDefineTheorems}{0}
\def\Z{\mathbb Z}
\def\R{\mathbb R}
\def\Q{\mathbb Q}
\def\N{\mathbb N}
\def\A{\mathcal A}
\def\S{\mathcal S}
\def\tt{\mathbf t}
\def\aa{\mathbf a}
\def\GG{G}
\def\SL{ {\rm SL} }
\newtheorem{thm}{Theorem}
\newtheorem{theorem}[thm]{Theorem}
\newtheorem{corollary}[thm]{Corollary}
\newtheorem{lemma}[thm]{Lemma}
\newtheorem{proposition}[thm]{Proposition}
\theoremstyle{definition}
\newtheorem{definition}[thm]{Definition}
\newtheorem{example}[thm]{Example}
\crefname{thm}{theorem}{theorems}
\crefname{thrm}{theorem}{theorems}
\crefname{coro}{corollary}{corollaries}
\crefname{example}{example}{examples}
\crefname{lem}{lemma}{lemmas}
\crefname{lmm}{lemma}{lemmas}
\crefname{claim}{claim}{claims}
\crefname{obs}{observation}{observations}
\crefname{proposition}{proposition}{propositions}
\crefname{prop}{proposition}{propositions}
\crefname{defi}{definition}{definitions}
\crefname{theorem}{theorem}{theorems}
\crefname{corollary}{corollary}{corollaries}
\crefname{example}{example}{examples}
\crefname{lemma}{lemma}{lemmas}
\crefname{proposition}{proposition}{propositions}
\crefname{definition}{definition}{definitions}
\theoremstyle{remark}
\newtheorem{remark}[thm]{Remark}
\crefname{example}{example}{examples}
\begin{document}

\title{Markov constant and quantum instabilities}

\author{Edita Pelantová}
\affil{Faculty of Nuclear Sciences and Physical Engineering\\ Czech Technical University in Prague\\ Prague, Czech Republic}

\author{Štěpán Starosta%
  \thanks{Electronic address: \texttt{stepan.starosta@fit.cvut.cz}; Corresponding author}}
\affil{Faculty of Information Technology\\ Czech Technical University in Prague\\ Prague, Czech Republic}

\author{Miloslav Znojil}
\affil{Nuclear Physics Institute ASCR\\ Řež, Czech Republic}

\maketitle

\begin{abstract}
For a qualitative analysis of spectra of certain two-dimensional rectangular-well quantum systems several rigorous methods of number
theory are shown productive and useful.
These methods (and, in particular, a generalization of the concept of Markov constant known in Diophantine approximation theory) are shown to provide a new mathematical insight in the phenomenologically relevant occurrence of anomalies in the spectra.
Our results may inspire methodical innovations ranging from the description of the stability properties of metamaterials and of certain hiddenly unitary quantum evolution models up to the clarification of the mechanisms of occurrence of ghosts in quantum cosmology.
\end{abstract}

\textbf{Keywords:} renormalizable quantum theories with ghosts;
Pais-Uhlenbeck model;
singular spectra;
square-well model;
number theory analysis;
physical applications;
metamaterials;
Markov constant;
continued fraction;

\section{Introduction} \label{zacatek}

The main {\em mathematical} inspiration of our present physics-oriented paper may be traced back to the theory of Diophantine approximations in which an important role is played by certain sets of real numbers possessing an accumulation point called Markov constant \cite{Burger-jungle}.
The related ideas and techniques (to be shortly outlined below) are transferred to an entirely different context.
Briefly, we show that and how some of the results of number theory may appear applicable in an analysis of realistic quantum dynamics.

The sources of our {\em phenomenological} motivation are more diverse.
Among them, a distinct place is taken by the problems of quantum stability which are older than the quantum theory itself.
Their profound importance already became clear in the context of the Niels Bohr's model of atom \cite{Bohr}.
In this light one of the main achievements of the early quantum theory may be seen precisely in the explanation of the well verified experimental observation that many quantum systems (like hydrogen atom, etc.) are safely stable.

During the subsequent developments of the quantum theory, the rigorous mathematical foundation of the concept of quantum stability found its safe ground in the spectral theory of self-adjoint operators in Hilbert space \cite{Simon}.
Although it may sound like a paradox, a similar interpretation of the {\em loss} of quantum stability is much less developed at present. This does not imply that the systematic study of instabilities would be less important.
The opposite is true because the majority of existing quantum systems ranging from elementary particles to atomic nuclei and molecules are unstable.

In this direction of study one could only feel discouraged by the fact that the existing theoretical descriptions of quantum instabilities require complicated mathematics, be it in quantum field theory, in statistical quantum physics or, last but not least, in the representations of quantum models using non-selfadjoint operators \cite{Babook}.
For this reason we believe that our present approach combining a sufficiently rigorous level of mathematics with a not too complicated exemplification of quantum systems might offer a fresh and innovative perspective to quantum physics and, in
particular, to some of its stability and instability aspects.

It is certainly encouraging for us to notice that a combination of Diophantine analysis with phenomenological physics already appeared relevant in the context of study of certain stable quantum systems controlled by point interactions and living on rectangular lattices \cite{mrizka} or on hexagonal lattices \cite{hexe} where, typically, the band spectra may depend on certain number-theoretical characteristics of the system.
In what follows, we intend to turn our attention from complicated quantum graphs to a maximally elementary and exactly solvable model in which the hyperbolic partial differential equation
 \begin{equation}
\Box f(x,y) = \lambda f(x,y)\,, \ \ \ \
 \Box= \frac{\partial^2}{\partial x^2} - \frac{\partial^2}{\partial y^2}
 \,, \ \ \ \ f|_{\partial R} = 0
 \label{presm}
  \end{equation}
is studied and in which the instability is immanently present, in a way to be discussed below, via the unboundedness of the spectrum from below.

In our model the eigenfunctions are required to satisfy the most common Dirichlet boundary conditions, i.e., they are expected to vanish along the boundary of the two-dimensional rectangle
\begin{equation}
R = \{ (x,y) \colon 0 \leq x \leq a, 0 \leq y \leq b \}.
\label{represm}
 \end{equation}
In sections \ref{spepro}--\ref{posledni-matematicka} we describe and prove rigorous results of analysis of such a model.
After a systematic presentation of these mathematical observations we return, in sections \ref{konec} and \ref{ukonec}, to the problem of their various potential connections with physics.
We also list there a few not entirely artificial samples of placing the Klein-Gordon-resembling Eq.~(\ref{presm}) into a broader phenomenological context.

\section{Spectral problem} \label{spepro}

\subsection{Separation of variables}

Our present analysis is fully concentrated upon the properties of spectra of hyperbolic partial differential operators $\Box$ of Eqs.~(\ref{presm}) + (\ref{represm}) which act upon twice differentiable functions $ f(x,y)  $ of two real variables.
Setting $f(x,y) = g(x) h(y)$ we find that the eigenvalue problem is easily solvable by separation of variables, i.e., that there exist
constants $C$ and $D$ such that
$$
\frac{\Box f(x,y)}{f(x,y)} = \frac{\frac{\partial^2 g(x)}{\partial
x^2}}{g(x)}
 - \frac{\frac{\partial^2 h(y)}{\partial y^2}}{h(y)} =  C - D = \lambda.
$$
The solution of the corresponding ordinary differential equation for unknown $g(x)$ (and, {\it mutatis mutandis}, for $h(y)$) yields
$$
g(x) = \alpha \sin (\sqrt{-C} x) + \beta \cos (\sqrt{-C} a)
$$
for $C< 0$,
$$
g(x) = \alpha x + \beta
$$
for $C = 0$, and
$$
g(x) = \alpha e^{-\sqrt{C} x} + \beta e^{\sqrt{C}x}
$$
for $C > 0$.
Under our Dirichlet boundary conditions, a nonzero solution is obtained only for $C < 0$. We obtain
$$
a\sqrt{-C} = m \pi
$$
for $m \in \Z$.
Analogously, we obtain
$$
b\sqrt{-D} = k \pi
$$
for $k \in \Z$.
Since $\lambda = C - D$, we have, finally,
 $$
 \lambda_{k,m} = \frac{k^2\pi^2}{a^2} - \frac{m^2\pi^2}{b^2}
 = \frac{\pi^2m^2}{a^2}\left( \frac{k^2}{m^2} -
 \frac{a^2}{b^2}\right) = \frac{\pi^2m^2}{a^2}\left( \frac{k}{m}
 - \frac{a}{b}\right)\left( \frac{k}{m} + \frac{a}{b}\right)
 $$
for all $k,m \in \Z$.
Thus, the spectrum equals the closure of the set of all $\lambda_{k,m}$:
\[\sigma(\Box)=\overline{ \{ \lambda_{k,m} \colon k,m \in \Z
\}}.\]

\subsection{The number theory approach}

Up to a multiplicative factor, the singular part of the spectrum  $\sigma(\Box)$ coincides with the set
 $$
 \S(\alpha) = \text{set\ of\ all\ accumulation\ points\ of\ }
 \left \{ m^2 \left( \frac{k}{m} - \alpha \right)
 \colon k,m \in \Z  \right \}
 $$
where the ratio $\alpha = {a}/{b}$ is a dynamical parameter of the model.
The structure of such sets is well understood in the theory of Diophantine approximations.
In particular, the smallest accumulation point of the displayed set  -- the so-called  Markov constant of $\alpha$ --  is in the centre of interest of many mathematicians.

This observation is in fact a methodical starting point of our present paper.
In essence, our analysis of stability/instability issues are mainly inspired by the results of the existing number-theory literature on Markov constant.

\section{Simple properties of $\S(\alpha)$}

Assume $\alpha \in \R$.
As the set $\Z^2$ is countable, the set $\left \{ m^2 \left( \frac{k}{m} - \alpha \right) \colon k,m \in \Z  \right \}$ can be viewed as the range of a real sequence.
Let us rephrase the definition of $\S(\alpha)$: a number $x$ belongs to  $ \S(\alpha)$ if there exist strictly
monotone sequences of integers $(k_n)$  and $(m_n)$ such that $x=\lim\limits_{n\to \infty} m_n^2\Bigl( \frac{k_n}{m_n} - \alpha\Bigr)$.

We list several simple properties of $\S(\alpha)$.

\begin{enumerate}

\item Since the set of accumulation points of any real sequence is closed, the set $\S(\alpha)$ is a topologically closed subset of $\R$.

\item $\S(\alpha)$ is closed under multiplication by $z^2$ for each $z \in \Z$.

\begin{proof}
If $x \in \S(\alpha)$, i.e., $m_n^2\Bigl( \frac{k_n}{m_n} - \alpha\Bigr) \to x$, then $(m_nz)^2\Bigl( \frac{k_nz}{m_nz} - \alpha\Bigr) \to xz^2$, thus $xz^2 \in \S(\alpha)$.
\end{proof}

\item  If $\alpha \in \Q$,  then  $\S(\alpha)$ is empty.

\begin{proof}
If $\alpha =\frac{r}{s}$ with $r,s \in \Z$, then  $m^2\Bigl(\frac{k}{m} - \frac{r}{s}\Bigr) =  \frac{t}{s}$ for some $t\in \Z$.
It means that $\left\{ m^2 \left( \frac{k}{m} - \alpha \right) \colon k,m \in \Z  \right \}$  is a subset of the  discrete set $\frac{1}{s} \Z$.
\end{proof}

\item If $ \alpha \notin \Q$,  then  $\S(\alpha)$ has at least one element in the interval $[-1 ,1]$.

\begin{proof}
According to Dirichlet's theorem, there exist infinitely many rational numbers $\frac{k}{m}$ such that $\bigl|\frac{k}{m} - \alpha  \bigr|< \frac{1}{m^2}$.
\end{proof}
\end{enumerate}

In order to present another remarkable property of $\S(\alpha)$ we exploit simple rational transformations connected with
$$
\GG= \left \{  g \in \Z^{2\times 2}  \colon \det (g) \neq 0 \right \}\quad \text{ and } \quad \SL_2( \Z )= \left \{  g  \in \GG  \colon \det (g) =1 \right \}.
$$
Note that  $\GG$ is a monoid, whereas  $\SL_2( \Z )$ is a group.
We define the action of $g = \begin{pmatrix} c & d \\ e & f \end{pmatrix} \in \GG $  on  the set $\R$ by  $\alpha \mapsto g\alpha = \frac{c\alpha+d}{e\alpha + f}$.

\begin{proposition}\label{mobius}
Let $\alpha \in \R$ and $g \in \GG$.
We have
$$
\det (g) \S(\alpha) \subset    \S(g\alpha).
$$
In particular, $\S(g\alpha) = \S(\alpha)$ if $g \in  \SL_2( \Z )$.
\end{proposition}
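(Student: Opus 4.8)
The plan is to transport each accumulation point forward through the M\"obius action by applying $g$ directly to the rational approximants of $\alpha$. Since $\S(\alpha)$ is empty whenever $\alpha \in \Q$ (property~3 above), the inclusion $\det(g)\S(\alpha)\subset \S(g\alpha)$ is vacuous in that case, so I would assume $\alpha \notin \Q$ from the outset. For such $\alpha$ one has $e\alpha+f\neq 0$ and $c\alpha+d\neq 0$, because either vanishing would force $\alpha$ to be rational (the degenerate cases $e=f=0$ or $c=d=0$ are excluded by $\det(g)\neq 0$). Now fix $x\in\S(\alpha)$ together with strictly monotone integer sequences $(k_n),(m_n)$ satisfying $m_n^2\bigl(\frac{k_n}{m_n}-\alpha\bigr)=m_n(k_n-m_n\alpha)\to x$, and define the transformed integers $K_n=ck_n+dm_n$ and $M_n=ek_n+fm_n$, so that $\frac{K_n}{M_n}=g\frac{k_n}{m_n}$.

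The heart of the argument is a single algebraic identity. Writing $g\alpha=\frac{c\alpha+d}{e\alpha+f}$ and clearing the denominator, the $k_n\alpha$ and $m_n$ cross terms cancel and the numerator collapses, giving
$$
K_n - M_n(g\alpha) = \frac{\det(g)\,(k_n-m_n\alpha)}{e\alpha+f}.
$$
Multiplying by $M_n=ek_n+fm_n$ and using $\frac{ek_n+fm_n}{e\alpha+f}=m_n+\frac{e(k_n-m_n\alpha)}{e\alpha+f}$, I obtain
$$
M_n^2\Bigl(\tfrac{K_n}{M_n}-g\alpha\Bigr)
= \det(g)\,m_n(k_n-m_n\alpha) + \frac{\det(g)\,e\,(k_n-m_n\alpha)^2}{e\alpha+f}.
$$
Since $(m_n)$ is strictly monotone we have $|m_n|\to\infty$, which forces $k_n-m_n\alpha\to 0$; hence the first term tends to $\det(g)\,x$ and the second to $0$, so the left-hand side converges to $\det(g)\,x$.

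It then remains to verify that $(K_n)$ and $(M_n)$ are admissible witnesses in the sense of the definition of $\S$, i.e.\ that they may be taken strictly monotone. This is where I expect the only real care to be needed, since the identity itself is routine. Because $\alpha$ is irrational and $c\alpha+d,\ e\alpha+f\neq 0$, the ratios $K_n/m_n\to c\alpha+d$ and $M_n/m_n\to e\alpha+f$ are nonzero in the limit, so $|K_n|\to\infty$ and $|M_n|\to\infty$. I would therefore pass first to a subsequence along which $(M_n)$ is strictly monotone, then to a further subsequence along which $(K_n)$ is strictly monotone; the limit $\det(g)\,x$ is unchanged, exhibiting $\det(g)\,x\in\S(g\alpha)$ and proving $\det(g)\S(\alpha)\subset\S(g\alpha)$.

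Finally, for the ``in particular'' clause, if $g\in\SL_2(\Z)$ then $\det(g)=1$ and the inclusion reads $\S(\alpha)\subset\S(g\alpha)$ for every real $\alpha$. Applying this with $\alpha$ replaced by $g\alpha$ and $g$ replaced by $g^{-1}\in\SL_2(\Z)$ (using that $\SL_2(\Z)$ is a group and that the M\"obius action is a genuine left action, so $g^{-1}(g\alpha)=\alpha$) yields $\S(g\alpha)\subset\S(\alpha)$, whence $\S(g\alpha)=\S(\alpha)$.
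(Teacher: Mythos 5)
Your proposal is correct and takes essentially the same route as the paper's proof: both transport the witness sequences through the linear action $(k_n,m_n)\mapsto(ck_n+dm_n,\,ek_n+fm_n)$, establish the same algebraic identity (your sum formula is exactly the paper's product formula $\det(g)\,m_n^2\bigl(\tfrac{k_n}{m_n}-\alpha\bigr)\tfrac{e\frac{k_n}{m_n}+f}{e\alpha+f}$ expanded), and settle the $\SL_2(\Z)$ case by applying the inclusion to $g^{-1}$. The only difference is that you explicitly dispatch the edge cases the paper leaves implicit --- the vacuous rational case, the nonvanishing of $e\alpha+f$, and the strict monotonicity of the transformed sequences required by the definition of $\S$ --- which is a welcome but minor refinement rather than a different argument.
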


\begin{proof}
Let $g = \begin{pmatrix} c & d \\ e & f \end{pmatrix} \in \GG$.
Let $x \in \S(\alpha)$ and let $(k_n)$ and $(m_n)$ be sequences such that $m_n^2 \left( \frac{k_n}{m_n} - \alpha \right) \to x$.
We set
$$
k'_n = ck_n + dm_n \quad \text{ and } \quad m'_n = ek_n + fm_n.
$$

We obtain
\begin{align*}
\left(\frac{k'_n}{m'_n}- g \alpha   \right) {m'_n}^2 & =  \left( \frac{ck_n + dm_n}{ek_n + fm_n}- \frac{c\alpha+d}{e\alpha + f}  \right) (ek_n + fm_n)^2 \\
& = \frac{k_n (cf - de)-\alpha m_n (cf - de)}{(e\alpha + f)(ek_n + fm_n)} (ek_n + fm_n)^2 \\
& = \frac{(k_n -\alpha m_n  )(cf - de) }{e\alpha + f} (ek_n + fm_n) \\
& = \det (g)\,  m_n^2 \Bigl(\frac{k_n}{m_n}-\alpha\Bigr) \, \frac { e\frac{k_n}{m_n} + f}{e\alpha + f} \\
& \underset{n \to + \infty}{\longrightarrow} \det (g) \,x
\end{align*}
as $\frac{k_n}{m_n} \to \alpha$.
It means that $\det(g)\,x$ belongs to $\S(g\alpha)$.\\

If  $\det g =1$, then $\S(\alpha) \subset \S(g\alpha)$ and  $g^{-1} \in  \SL_2( \Z )$ as well.
Therefore,  $\S(g\alpha) \subset \S(g^{-1} g\alpha) =\S(\alpha)$, too.
\end{proof}

In the sequel,  $\lfloor x\rfloor$ stands for  the integer part of $x$, i.e., the  largest integer $n$ such that $n\leq x$.
Since  $ \alpha- \lfloor\alpha\rfloor = g \alpha$, where $g = \begin{pmatrix} 1 & \lfloor\alpha\rfloor  \\ 0 & 1 \end{pmatrix} \in \SL_2( \Z )$, the previous proposition immediately implies the following corollary.
\begin{corollary}
For any  $\alpha\in \R$ we have   $\S(\alpha) = \S(\alpha- \lfloor\alpha\rfloor)$.
\end{corollary}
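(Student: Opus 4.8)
The plan is to recognize this corollary as an immediate specialization of Proposition~\ref{mobius}, so almost all of the work has already been done. First I would produce a single matrix $g \in \SL_2(\Z)$ whose Möbius action sends $\alpha$ to its fractional part $\alpha - \lfloor\alpha\rfloor$. The natural candidate is the integer-translation matrix $g = \begin{pmatrix} 1 & -\lfloor\alpha\rfloor \\ 0 & 1 \end{pmatrix}$, for which the action defined above gives $g\alpha = \frac{\alpha - \lfloor\alpha\rfloor}{1} = \alpha - \lfloor\alpha\rfloor$. Since $\lfloor\alpha\rfloor \in \Z$, the entries of $g$ are integers, and $\det(g) = 1$, so indeed $g \in \SL_2(\Z)$.

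With this matrix in hand, the conclusion is one line. The \emph{in particular} clause of Proposition~\ref{mobius} asserts that $\S(g\alpha) = \S(\alpha)$ for every $g \in \SL_2(\Z)$. Applying it to the specific translation matrix above yields $\S(\alpha - \lfloor\alpha\rfloor) = \S(\alpha)$, which is precisely the assertion of the corollary. No appeal to the weaker inclusion $\det(g)\,\S(\alpha) \subset \S(g\alpha)$ is needed, since here $\det(g)=1$ and the two-sided equality is already available.

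There is essentially no obstacle in this argument; the only point that requires care is the bookkeeping of the sign of the off-diagonal entry, namely that one uses $-\lfloor\alpha\rfloor$ rather than $+\lfloor\alpha\rfloor$ so that the Möbius action \emph{subtracts} the integer part and lands on the fractional part $\alpha - \lfloor\alpha\rfloor \in [0,1)$. Everything substantive — the $\SL_2(\Z)$-invariance of $\S(\alpha)$ together with the fact that integer translations lie in $\SL_2(\Z)$ — has been established in the preceding proposition, so the corollary follows purely by instantiation.
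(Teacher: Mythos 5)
Your proof is correct and is exactly the paper's argument: the corollary is obtained by instantiating the \emph{in particular} clause of Proposition~\ref{mobius} with the integer-translation matrix, which lies in $\SL_2(\Z)$. Your sign bookkeeping is in fact more careful than the paper's, whose displayed matrix $\begin{pmatrix} 1 & \lfloor\alpha\rfloor \\ 0 & 1 \end{pmatrix}$ contains a typo: as you observe, the off-diagonal entry must be $-\lfloor\alpha\rfloor$ so that the action $g\alpha = \frac{\alpha - \lfloor\alpha\rfloor}{1}$ lands on the fractional part rather than on $\alpha + \lfloor\alpha\rfloor$.
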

Note that   $g\alpha =\alpha$ for any   $g = \begin{pmatrix} z& 0 \\ 0 & z\end{pmatrix}$ with $z\in \Z$.
\Cref{mobius} implies $z^2\S(\alpha) \subset \S(\alpha)$, as already observed.

\section{Continued fractions and convergents} \label{ctyri}
The theory of continued fractions plays a crucial role in Diophantine approximation, i.e., in  approximation of an irrational number by a rational number.
The definition of $\S(\alpha)$ indicates that the quality of an approximation of $\alpha$ by fractions $\frac{k}{m}$ governs the behaviour of  $ \S(\alpha)$.
The continued fraction of an irrational  number $x$ is a coding of the orbit of $x$ under a transformation $T$ defined by
$$T: \mathbb{R}\setminus \Q  \ \to \  (1, +\infty)\setminus \Q\qquad \text{ and } \qquad  T(x) = \frac{1}{x-\lfloor x\rfloor }.$$

\begin{definition}
Let $x \in \mathbb{R}\setminus \Q $.
The \textit{continued fraction} of $x$ is the infinite sequence of integers $[a_0, a_1, a_2, a_3, \ldots]$   where
$$a_i=  \lfloor T^i(x)\rfloor \quad \text{ for all } \  i=0,1,2,3, \ldots\,.$$
\end{definition}
 Clearly, for all $i\geq 1$ the coefficient $a_i$ is a positive integer.
 Only the coefficient $a_0$ takes values in the whole range of integers.

 If $\alpha$ is an irrational number, then $T(\alpha) = g\alpha$ with    $g = \begin{pmatrix} 0 & 1 \\ 1 & -\lfloor\alpha\rfloor  \end{pmatrix} \in \GG$.
 As $\det(g) = -1$, Proposition \ref{mobius} implies $  \S(T(\alpha)) = - \S(\alpha).$
A number $x$ is usually identified with its continued fraction and we also write $x=[a_0, a_1, a_2, a_3, \ldots]$.
Using this convention, the previous fact can be generalized as follows.

\begin{corollary}\label{stejne} Let $[a_0, a_1, a_2, a_3, \ldots]$ be a continued fraction.
We have
$$
\S([a_{n+k}, a_{n+1+k}, a_{n+2+k}, \ldots]) = (-1)^k \S([a_n, a_{n+1}, a_{n+2},  \ldots])\, \text{ for any  \  }
k,n \in \N.
$$
\end{corollary}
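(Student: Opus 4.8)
The plan is to reduce the statement to the single-shift identity $\S(T(\beta)) = -\S(\beta)$ that has just been recorded, and then to iterate it by induction on $k$. The first observation I would make is purely notational: if $\beta = [b_0, b_1, b_2, \ldots]$ is the irrational number encoded by a continued fraction, then applying $T$ once deletes the leading coefficient, since the coefficients of $T(\beta)$ are $\lfloor T^i(T(\beta))\rfloor = \lfloor T^{i+1}(\beta)\rfloor = b_{i+1}$. Hence $T([b_0, b_1, b_2, \ldots]) = [b_1, b_2, b_3, \ldots]$, and more generally $T^j([b_0, b_1, \ldots]) = [b_j, b_{j+1}, \ldots]$. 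In particular, writing $\alpha = [a_0, a_1, a_2, \ldots]$, the two tails appearing in the statement are $[a_n, a_{n+1}, \ldots] = T^n(\alpha)$ and $[a_{n+k}, a_{n+k+1}, \ldots] = T^{n+k}(\alpha) = T^k(T^n(\alpha))$.

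With this in hand the corollary becomes the assertion that $\S(T^k(\gamma)) = (-1)^k\,\S(\gamma)$ for the irrational number $\gamma = T^n(\alpha)$, which I would prove by induction on $k$. The base case $k=0$ is the tautology $\S(\gamma) = \S(\gamma)$. For the inductive step, assuming $\S(T^k(\gamma)) = (-1)^k\,\S(\gamma)$, I apply the single-shift identity to the irrational number $T^k(\gamma)$ to obtain $\S(T^{k+1}(\gamma)) = \S\bigl(T(T^k(\gamma))\bigr) = -\S(T^k(\gamma)) = (-1)^{k+1}\,\S(\gamma)$, which closes the induction.

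The only point that needs genuine care -- and the closest thing to an obstacle here -- is ensuring that the single-shift identity is legitimately applicable at every stage of the induction. That identity was derived from Proposition~\ref{mobius} applied to the matrix $g = \begin{pmatrix} 0 & 1 \\ 1 & -\lfloor\delta\rfloor \end{pmatrix}$ realizing $T(\delta) = g\delta$, whose determinant is $-1$, and it was stated only for irrational arguments, because $T$ itself is defined only on the irrationals. I would therefore note that every iterate $T^k(\gamma)$ remains irrational: the number $\gamma = T^n(\alpha)$ is irrational because $\alpha$ is, and $T$ maps irrationals to irrationals. This guarantees that $T^k(\gamma)$ always lies in the domain of $T$ and of the identity, so the inductive step is valid for all $k$, and no further estimates or number-theoretic input are required.
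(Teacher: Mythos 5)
Your proof is correct and matches the paper's intent exactly: the paper states this corollary without proof as the immediate generalization of the recorded identity $\S(T(\alpha)) = -\S(\alpha)$, and your induction on $k$ (together with the observation that $T$ shifts the continued-fraction expansion and preserves irrationality) is precisely the formalization of that iteration.
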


The knowledge of the continued fraction of $x$ allows us to find the best rational approximations, in a certain sense, of the number $x$.
To describe these approximations, we use the following notation:
$[a_0, a_1, a_2, a_3, \ldots, a_n]$, where $a_0 \in \Z$ and $a_1, \ldots, a_n \in \N \setminus\{0\}$, denotes the fraction
$$
a_0 + \cfrac{1}{a_1 + \cfrac{1}{a_2 + \cfrac{1}{\ddots  + \cfrac{1}{a_n}}}} \ .
$$
The number $a_i$ is said to be the $i$-th \textit{partial quotient} of $x$.

\begin{definition} Let $x$ be an irrational number, $[a_0, a_1, a_2, a_3, \ldots]$ its continued fraction and let  $N \in \N$.  Let $p_N \in \Z$  and $q_N\in \N \setminus\{0\}$ denote coprime numbers such that
$\frac{p_N}{q_N} = [a_0, a_1, a_2, a_3, \ldots, a_N]$.
The fraction $\frac{p_N}{q_N}$ is called the \textit{$N^{th}$-convergent} of $x$.
 \end{definition}

We list the relevant properties of convergents of an irrational number $\alpha$.
They can be found in any textbook of number theory, for example \cite{Burger-jungle}.
\begin{enumerate}

\item  We have $p_0 =a_0,  q_0=1,  p_1 = a_0a_1+1$ and $q_1=a_1$.
For any $N \in \N$, we have
\begin{equation} \label{recurence}
p_{N+1} = a_{N+1}p_{N}+ p_{N-1} \quad \text{and} \quad q_{N+1} = a_{N+1}q_{N}+ q_{N-1}.
\end{equation}

\item  For $N\in\N$, set $\alpha_{N+1} = [a_{N+1},a_{N+2},a_{N+3},\ldots ]$.
We have \begin{equation}\label{distance}\frac{p_N}{q_N}-\alpha   = \frac{(-1)^{N+1}}{q_{N}(\alpha_{N+1}q_N +q_{N-1})}\,, \quad \text {and in particular, } \ \ \left| \frac{p_N}{q_N}-\alpha \right| < \frac{1}{q_N^2}. \end{equation}

\item For $N\in\N$ and $a\in \Z$ satisfying $1\leq a \leq a_{N+1}-1$ we have
\begin{equation}\label{secondconvergents}\frac{ap_N+p_{N-1}}{aq_N+q_{N-1}}  -\alpha = \frac{(-1)^{N+1}(\alpha_{N+1}-a)}{(aq_{N}+q_{N-1})(\alpha_{N+1}q_N +q_{N-1})}.
\end{equation}
These rational approximations are known as {\it secondary convergents} of $\alpha$.
\end{enumerate}

\begin{corollary}
Let $\alpha$ be  an irrational number and $I$ be an interval.
There exists $\beta \in I$ such that $ \S(\alpha) = \S(\beta)$.
\end{corollary}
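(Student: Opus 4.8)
The plan is to exploit the fact, encapsulated in \Cref{stejne}, that $\S$ depends only on the \emph{tail} of the continued fraction of its argument, up to a sign governed by the number of suppressed partial quotients. Concretely, I would fix the tail so as to recover $\S(\alpha)$, while steering the resulting number into the prescribed interval $I$ by a free choice of a long admissible prefix. Geometrically this amounts to producing an element of the $\GG$-orbit of $\alpha$ that lands in $I$, with the determinant sign tracked by \Cref{mobius}.

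First I would write $\alpha = [a_0, a_1, a_2, \ldots]$ and recall from the discussion preceding \Cref{stejne} that $\S([a_1, a_2, \ldots]) = \S(T(\alpha)) = -\S(\alpha)$; the point of passing to $T(\alpha)$ is that its leading partial quotient $a_1$ is $\geq 1$, so the tail may be legitimately spliced after any prefix. Assuming, as the statement tacitly requires, that $I$ is nondegenerate, I would pick an irrational $\gamma = [b_0, b_1, b_2, \ldots]$ in the interior of $I$; since $b_0 \in \Z$ and $b_i \geq 1$ for $i \geq 1$, each finite block $b_0, \ldots, b_{m-1}$ is an admissible prefix, and I set
$$
\beta^{(m)} = [b_0, b_1, \ldots, b_{m-1}, a_1, a_2, a_3, \ldots],
$$
a genuine continued fraction whose tail from position $m$ onward equals $[a_1, a_2, \ldots]$.

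The invariance of $\S$ is then immediate from \Cref{stejne}: reading $\beta^{(m)} = [c_0, c_1, \ldots]$, its tail at position $m$ is $[a_1, a_2, \ldots]$, so $\S([a_1, a_2, \ldots]) = (-1)^m \S(\beta^{(m)})$, whence $\S(\beta^{(m)}) = (-1)^{m+1}\S(\alpha)$; choosing $m$ \emph{odd} gives $\S(\beta^{(m)}) = \S(\alpha)$ exactly. To localize $\beta^{(m)}$ inside $I$ I would invoke the convergent estimate (\ref{distance}): since $\gamma$ and $\beta^{(m)}$ share the first $m$ partial quotients $b_0, \ldots, b_{m-1}$, they share the convergent $\frac{p_{m-1}}{q_{m-1}}$, and (\ref{distance}) bounds both $\bigl|\gamma - \frac{p_{m-1}}{q_{m-1}}\bigr|$ and $\bigl|\beta^{(m)} - \frac{p_{m-1}}{q_{m-1}}\bigr|$ by $q_{m-1}^{-2}$. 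Hence $|\beta^{(m)} - \gamma| < 2\,q_{m-1}^{-2} \to 0$, because the denominators strictly increase by the recurrence (\ref{recurence}). As $\gamma$ is interior to $I$, for all large odd $m$ one has $\beta^{(m)} \in I$, and taking $\beta = \beta^{(m)}$ for such an $m$ finishes the argument.

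I expect the only genuine subtlety to be the simultaneous bookkeeping of the two constraints: the $\S$-invariance forces a parity condition on the prefix length $m$, whereas the localization in $I$ demands $m$ large. These are compatible precisely because the approximation $\beta^{(m)} \to \gamma$ holds along every residue class of $m$, so restricting to odd $m$ costs nothing. A secondary point I would flag explicitly is the admissibility of the splice — working with $T(\alpha)$ rather than $\alpha$ guarantees the appended tail begins with a partial quotient $\geq 1$, so that $\beta^{(m)}$ really is a continued fraction and \Cref{stejne} applies verbatim.
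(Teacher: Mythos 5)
Your proof is correct and takes essentially the same route as the paper: splice a long prefix of an irrational $\gamma \in I$ onto a tail of the continued fraction of $\alpha$, localize the resulting $\beta$ near $\gamma$ via the shared convergent and the estimate \eqref{distance}, and invoke \Cref{stejne} for the invariance of $\S$. The only difference is cosmetic sign bookkeeping: the paper splices the tail $a_{N+1}, a_{N+2}, \ldots$ at the very position $N+1$ it occupies in $\alpha$, so the factors $(-1)^{N+1}$ cancel automatically, whereas you fix the tail $[a_1, a_2, \ldots]$ and restrict to odd prefix lengths $m$ --- both versions work.
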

\begin{proof}
Without loss of generality, let $I$ be an open interval  and $\gamma \in I$ be an irrational number.
Let  $[a_0,a_1, a_2, \ldots ]$ and $[c_0,c_1, c_2, \ldots ]$ be continued fractions of $\alpha$ and $\gamma$, respectively.
Find $\varepsilon >0$ such that  $(\gamma -2\varepsilon, \gamma + 2\varepsilon) \subset I$.
In virtue of \eqref{distance} one can find  an integer $N$ such that the $N^{th}$-convergent $\frac{p'_{N}}{q'_{N}} $  of $\gamma$ satisfies $ \left| \gamma - \frac{p'_{N}}{q'_{N}}\right| <  \frac{1}{(q'_{N})^2}<\varepsilon$.
Define $$\beta=[c_0,c_1, \ldots, c_{N}, a_{N+1},a_{N+2}, \ldots ].$$
As the $N^{th}$-convergents  of  $\beta$ and $\gamma$ coincide and due to \eqref{distance}, we have \ $|\gamma - \beta| <  \frac{2}{(q'_{N})^2}<2\varepsilon$ and thus $\beta\in I$.
\Cref{stejne} implies that  the sets $\S(\alpha)$ and  $\S(\beta)$ coincide as well.
\end{proof}

\begin{theorem} \label{th:12-12}
Let $\alpha$ be an irrational number and $(\frac{p_N}{q_N})_{  N \in \N}$ be the sequence of its convergents.
If $x$ belongs to  $\S(\alpha)\cap (-\tfrac12, \tfrac12)$,  then $x$ is an accumulation point of the sequence
\begin{equation}\label{tojeono}
\left ( q_N^2 \left( \frac{p_N}{q_N} - \alpha \right)\right)_{  N \in \N}.
\end{equation}
\end{theorem}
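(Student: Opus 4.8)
The plan is to show that, once the approximating data are normalised, the hypothesis $|x|<\tfrac12$ forces the relevant fractions to be convergents, after which the value formula \eqref{distance} finishes the argument. First I would unwind the definition of $\S(\alpha)$: choose strictly monotone integer sequences $(k_n)$ and $(m_n)$ with $m_n^2\bigl(\tfrac{k_n}{m_n}-\alpha\bigr)\to x$, and, replacing $(k_n,m_n)$ by $(-k_n,-m_n)$ if necessary, arrange $m_n>0$; strict monotonicity then gives $m_n\to+\infty$. Since $|x|<\tfrac12$, for all large $n$ we have $\bigl|m_n^2(\tfrac{k_n}{m_n}-\alpha)\bigr|<\tfrac12$, equivalently $\bigl|\tfrac{k_n}{m_n}-\alpha\bigr|<\tfrac{1}{2m_n^2}$.

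The heart of the proof is the classical principle that such a good approximation must come from a convergent. Writing $\tfrac{k_n}{m_n}$ in lowest terms as $\tfrac{p}{q}$ we get $\bigl|\tfrac{p}{q}-\alpha\bigr|<\tfrac{1}{2m_n^2}\le\tfrac{1}{2q^2}$, and by Legendre's theorem a reduced fraction within $\tfrac{1}{2q^2}$ of $\alpha$ is one of the convergents $\tfrac{p_N}{q_N}$. I would re-derive exactly this from the paper's own formula \eqref{secondconvergents}: a reduced non-convergent fraction is either a secondary convergent $\tfrac{ap_N+p_{N-1}}{aq_N+q_{N-1}}$ with $1\le a\le a_{N+1}-1$ or lies still farther from $\alpha$, and a direct computation from \eqref{secondconvergents} shows that the corresponding value $(aq_N+q_{N-1})^2\bigl(\tfrac{ap_N+p_{N-1}}{aq_N+q_{N-1}}-\alpha\bigr)$ has modulus at least $\tfrac12$. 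Hence a value of modulus below $\tfrac12$ can only be produced by a genuine convergent, so there is an index $N_n$ with $\tfrac{k_n}{m_n}=\tfrac{p_{N_n}}{q_{N_n}}$.

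With the fraction identified, the value is read off from \eqref{distance}. In the favourable case $\gcd(k_n,m_n)=1$ one has $m_n=q_{N_n}$ and the approximating value is literally $q_{N_n}^2\bigl(\tfrac{p_{N_n}}{q_{N_n}}-\alpha\bigr)$, a term of the sequence \eqref{tojeono}. Since $q_{N_n}=m_n\to+\infty$ the indices $N_n$ tend to infinity, so these are infinitely many distinct terms of \eqref{tojeono} tending to $x$; this exhibits $x$ as an accumulation point of \eqref{tojeono}, which is the assertion.

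The step I expect to be the main obstacle is the control of the common factor $d_n=\gcd(k_n,m_n)$. In general $m_n=d_nq_{N_n}$ and the value equals $d_n^2\,q_{N_n}^2\bigl(\tfrac{p_{N_n}}{q_{N_n}}-\alpha\bigr)$, so when $d_n\ge 2$ it is a scaled convergent value rather than a convergent value, and a priori such imprimitive pairs might create accumulation points in $(-\tfrac12,\tfrac12)$ that are not approached by \eqref{tojeono} itself. To complete the proof I would therefore have to argue that the approximating pairs may be taken primitive, or equivalently that imprimitive pairs contribute no accumulation point of modulus below $\tfrac12$ beyond those already furnished by the convergents; this is exactly the point at which the hypothesis $|x|<\tfrac12$ together with the uniform bound $\bigl|q_N^2(\tfrac{p_N}{q_N}-\alpha)\bigr|<1$ from \eqref{distance} must be used decisively, and I anticipate that verifying it will require the bulk of the work.
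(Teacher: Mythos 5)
Your argument follows the same route as the paper's proof (Legendre's theorem), but the obstacle you flag in your last paragraph --- the common factor $d_n=\gcd(k_n,m_n)$ --- is exactly the right thing to worry about: it is a genuine gap, and in fact it cannot be closed, because the theorem is false as stated. Legendre's theorem identifies the rational \emph{number} $k_n/m_n$, written in lowest terms as $p_{N_n}/q_{N_n}$, as a convergent; but the approximating \emph{value} is $m_n^2\bigl(\tfrac{k_n}{m_n}-\alpha\bigr)=d_n^2\,q_{N_n}^2\bigl(\tfrac{p_{N_n}}{q_{N_n}}-\alpha\bigr)$, which is a term of \eqref{tojeono} only when $d_n=1$. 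For a concrete counterexample take $\alpha=[0,(10)^\omega]=\sqrt{26}-5$. By Lemma~\ref{odhad}, $q_N^2\bigl(\tfrac{p_N}{q_N}-\alpha\bigr)$ equals $(-1)^{N+1}$ times the reciprocal of $[(10)^\omega]+[0,10,\ldots,10]$, and this denominator tends to $(5+\sqrt{26})+(\sqrt{26}-5)=2\sqrt{26}$; hence the sequence \eqref{tojeono} has exactly two accumulation points, $\pm\tfrac{1}{2\sqrt{26}}\approx\pm0.098$. On the other hand, the imprimitive pairs $k_n=2p_{2n+1}$, $m_n=2q_{2n+1}$ give $m_n^2\bigl(\tfrac{k_n}{m_n}-\alpha\bigr)=4\,q_{2n+1}^2\bigl(\tfrac{p_{2n+1}}{q_{2n+1}}-\alpha\bigr)\to\tfrac{2}{\sqrt{26}}\approx0.392$, so $x=\tfrac{2}{\sqrt{26}}$ belongs to $\S(\alpha)\cap\bigl(-\tfrac12,\tfrac12\bigr)$ yet is not an accumulation point of \eqref{tojeono}.

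Do not fault yourself for being unable to finish: the paper's own proof is a one-line appeal to Legendre and commits precisely the oversight you isolated, conflating ``$k_n/m_n$ is a convergent'' with ``$(k_n,m_n)=(\pm p_N,\pm q_N)$''. The statement becomes true, by your argument completed, under an extra hypothesis ruling out imprimitive pairs, e.g.\ that the partial quotients of $\alpha$ are at most $6$: then Lemma~\ref{odhad} gives $\bigl|q_N^2\bigl(\tfrac{p_N}{q_N}-\alpha\bigr)\bigr|>\tfrac18$, so any value with $d_n\ge2$ has modulus exceeding $\tfrac12$, and a limit inside $\bigl(-\tfrac12,\tfrac12\bigr)$ forces $d_n=1$ for infinitely many $n$. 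This restricted version is all that the later results require (Theorems~\ref{vykousnuto} and~\ref{th:mezi0a1} work with partial quotients in $\{1,2,3,4\}$ and $\{4,5\}$), but the unrestricted theorem is invoked for the Euler number, whose partial quotients are unbounded; there the same imprimitive-pair mechanism (take $d_n\to\infty$ and indices $N_n$ with $a_{N_n+1}\approx d_n^2/c$, choosing parity to fix the sign) shows that every real $c$ lies in $\S(\mathrm{e})$, so the claim $\bigl(-\tfrac12,\tfrac12\bigr)\cap\S(\mathrm{e})=\{0\}$ made in that example fails as well. The only other repair would be to define $\S(\alpha)$ using coprime pairs only, but that would destroy the closure of $\S(\alpha)$ under multiplication by squares, which the paper uses elsewhere.
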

\begin{proof} The theorem is a direct consequence of Legendre's theorem (see for instance \cite{Burger-jungle}, Theorem 5.12): Let $\alpha$ be an irrational number and $\frac{p}{q} \in \Q$.
If
$
\left | \frac{p}{q} - \alpha \right | < \frac{1}{2q^2},
$
then $\frac{p}{q}$ is a convergent of $\alpha$.
\end{proof}

Therefore, we start to investigate the accumulation points of the sequence \eqref{tojeono}.

\begin{lemma}
\label{odhad}
Let  $\alpha$ be an irrational number and $(\frac{p_N}{q_N})_{  N \in \N}$ be the sequence of its convergents.
For any $N\in \N$ we have
$$
q_N^2 \left( \frac{p_N}{q_N} - \alpha \right) = (-1)^{N+1} \Bigl( [a_{N+1}, a_{N+2},\ldots] + [0,a_N,a_{N-1},\ldots,a_1]  \Bigr)^{-1}.
$$
In particular,  for any $N\in \N$
$$\frac{1}{2+ a_{N+1}} < \left| q_N^2 \left( \frac{p_N}{q_N} - \alpha \right) \right| <  \frac{1}{a_{N+1}}.$$
\end{lemma}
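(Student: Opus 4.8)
The plan is to obtain the closed form directly from the exact distance formula \eqref{distance} and then read off the estimate. Formula \eqref{distance} states that $\frac{p_N}{q_N} - \alpha = \frac{(-1)^{N+1}}{q_N(\alpha_{N+1} q_N + q_{N-1})}$ with $\alpha_{N+1} = [a_{N+1}, a_{N+2}, \ldots]$. Multiplying through by $q_N^2$ and cancelling a single factor of $q_N$ gives
$$q_N^2\left(\frac{p_N}{q_N} - \alpha\right) = \frac{(-1)^{N+1}}{\alpha_{N+1} + \frac{q_{N-1}}{q_N}}.$$
Thus the asserted identity is equivalent to the reversal relation $\frac{q_{N-1}}{q_N} = [0, a_N, a_{N-1}, \ldots, a_1]$, and proving this relation is the only real content of the first half of the lemma.

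I would prove the reversal relation by induction on $N$ using the recurrence \eqref{recurence}. Setting $r_N = q_N/q_{N-1}$, the recurrence $q_N = a_N q_{N-1} + q_{N-2}$ rearranges to $r_N = a_N + 1/r_{N-1}$, while the initial data $q_0 = 1$ and $q_1 = a_1$ give $r_1 = a_1 = [a_1]$. An easy induction then yields $r_N = [a_N, a_{N-1}, \ldots, a_1]$ for every $N \geq 1$, and inverting gives $q_{N-1}/q_N = [0, a_N, \ldots, a_1]$. The degenerate case $N = 0$ is handled separately by the standard convention $q_{-1} = 0$ (so that $q_{-1}/q_0 = 0$ matches the empty reversed fraction), which one reads off from \eqref{recurence} at $N = 0$.

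For the two-sided bound I would simply localise the denominator $D = \alpha_{N+1} + [0, a_N, \ldots, a_1]$. Writing $\alpha_{N+1} = a_{N+1} + 1/\alpha_{N+2}$ with $\alpha_{N+2} > 1$ shows $a_{N+1} < \alpha_{N+1} < a_{N+1} + 1$, and the finite fraction $[0, a_N, \ldots, a_1]$ lies strictly in $(0,1)$; hence $a_{N+1} < D < a_{N+1} + 2$. Passing to reciprocals and taking absolute values erases the sign $(-1)^{N+1}$ and produces precisely $\frac{1}{a_{N+1} + 2} < \left|q_N^2\left(\frac{p_N}{q_N} - \alpha\right)\right| < \frac{1}{a_{N+1}}$.

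The computations are routine once \eqref{distance} is invoked; the one step demanding genuine care is the inductive identification of $q_{N-1}/q_N$ with the reversed continued fraction $[0, a_N, \ldots, a_1]$, since the recurrence must be run in the direction of increasing index while the continued fraction is built up from the innermost partial quotient, and the base cases $N = 0, 1$ must be matched correctly to the conventions $q_{-1} = 0$, $q_0 = 1$.
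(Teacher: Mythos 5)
Your proposal is correct and follows essentially the same route as the paper: both invoke \eqref{distance}, reduce the identity to the reversal relation $\frac{q_{N-1}}{q_N} = [0,a_N,a_{N-1},\ldots,a_1]$, and prove that relation by induction via the recurrence \eqref{recurence} (you run the induction on $r_N = q_N/q_{N-1}$, the paper directly on $q_{N-1}/q_N$ using the digit-prepending rule --- the same argument in different clothing), with the two-sided bound then read off from the denominator exactly as you do. One microscopic slip: $[0,a_N,\ldots,a_1]$ is not always \emph{strictly} less than $1$ (it equals $1$ when $N=1$ and $a_1=1$), but your strict upper bound on the denominator survives because $\alpha_{N+1} < a_{N+1}+1$ is strict.
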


\begin{proof}
Using \eqref{distance} we obtain
 $$ q_N^2 \left( \frac{p_N}{q_N} - \alpha \right)= \frac{(-1)^{N+1}}{\alpha_{N+1}+\frac{q_{N-1}}{q_{N}}}.$$
 By definition $\alpha_{N+1} = [a_{N+1}, a_{N+2},\ldots]$.
 It remains to show that $\frac{q_{N-1}}{q_{N}} = [0,a_{N}, a_{N-1}\ldots, a_1]$.
 We exploit the recurrent relation \eqref{recurence} for $(q_N)$.
 We proceed by induction:

If $N=1$, then $q_0 = 1$ and $q_1=a_1$.
Clearly $ \frac{q_{0}}{q_{1}} = \tfrac{1}{a_1} = [0,a_1]$.

  If $N>1$, then
  \begin{equation}\label{prevod}\frac{q_{N-1}}{q_{N}} = \frac{q_{N-1}}{a_{N}q_{N-1}+q_{N-2}} = \frac{1}{a_N + \frac{q_{N-2}}{q_{N-1}}}.
  \end{equation}
The number $\beta \in (0,1)$ has its continued fraction  in the form $[0,b_1,b_2,\ldots]$.
If  $1\leq B\in  \Z$,  then   the algorithm for construction of continued fraction assigns to the number $\frac{1}{B+\beta}$ the continued fraction $[0,B,b_1,b_2, \ldots]$.
We apply this rule and the induction assumption to \eqref{prevod} with $B=a_N$ and $\beta =  \frac{q_{N-2}}{q_{N-1}} = [0,a_{N-1},a_{N-2},\ldots,a_1]$.
\end{proof}

\subsection{Spectra of  quadratic numbers}

A famous theorem of Lagrange says that an irrational  number $\alpha$ is a root of the quadratic polynomial $Ax^2+Bx+C$  with integer coefficients $A,B,C$  if and only if the continued fraction of $\alpha$ is eventually periodic, i.e.,  $\alpha = [a_0,a_1, \ldots, a_s,  (a_{s+1}, \ldots, a_{s+\ell})^\omega]$, where  $v^\omega$ denotes the infinite string formed by the repetition of the finite string $v$.

\begin{theorem} \label{th:quadratic}
Let $\alpha$ be a quadratic number and $\bigl(\frac{p_N}{q_N}\bigr)$ be the sequence if its convergents.
Let $\ell$ be the smallest period of the repeating part of the continued fraction of $\alpha$.
The sequence $\left ( q_N^2 \bigl( \frac{p_N}{q_N} - \alpha \bigr)\right)_{  N \in \N}$ has at most
\begin{itemize}
\item $\ell$ accumulation points if $\ell$ is even;
\item $2\ell$ accumulation points  if $\ell$ is odd.
\end{itemize}
Moreover, at least one of the accumulation points belongs to the interval $\left (-\tfrac12, \tfrac12 \right )$.
\end{theorem}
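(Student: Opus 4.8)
The plan is to read everything off the exact formula in \Cref{odhad}. Writing $\lambda_N := \alpha_{N+1} + \frac{q_{N-1}}{q_N}$ with $\alpha_{N+1} = [a_{N+1}, a_{N+2}, \ldots]$, that lemma gives $q_N^2\bigl(\frac{p_N}{q_N} - \alpha\bigr) = (-1)^{N+1}\lambda_N^{-1}$, so the statement reduces to controlling the positive magnitudes $\lambda_N^{-1}$ and the signs $(-1)^{N+1}$ separately. The accumulation points of the target sequence are thus determined once I understand, along each residue class of $N$, the limit of $\lambda_N$ together with the eventual parity of $N$.

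First I would pin down $\alpha_{N+1}$. Write $\alpha = [a_0, \ldots, a_s, (a_{s+1}, \ldots, a_{s+\ell})^\omega]$ with $\ell$ the minimal period (Lagrange). For $N \geq s$ the tail $a_{N+1}, a_{N+2}, \ldots$ is purely periodic, so $\alpha_{N+1}$ is \emph{exactly} a fixed quadratic irrational depending only on the phase $N \bmod \ell$; in particular it takes at most $\ell$ values. Next I would treat $\frac{q_{N-1}}{q_N}$. By \eqref{prevod} it obeys the recursion $\frac{q_{N-1}}{q_N} = \bigl(a_N + \frac{q_{N-2}}{q_{N-1}}\bigr)^{-1}$, i.e. $\beta_N = M_{a_N}(\beta_{N-1})$ with $M_a(t) = (a+t)^{-1}$. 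Each $M_a$ maps $(0,1)$ into itself; since the orbit stays in $(0,1)$, where $|M_a'(t)| = (a+t)^{-2} < 1$, the composite over one full period (for $N$ large its coefficients depend only on $N \bmod \ell$) is a strict contraction of $[0,1]$. Iterating it shows $\beta_N$ converges, along each residue class $N \equiv j \pmod{\ell}$, to the unique fixed point $\beta^*_j$ of that period map. Consequently $\lambda_N \to \Lambda_j := \alpha^{[j]} + \beta^*_j$ along $N \equiv j \pmod \ell$, giving at most $\ell$ distinct limits, hence at most $\ell$ values $\Lambda_j^{-1}$ for $\bigl|q_N^2(\frac{p_N}{q_N} - \alpha)\bigr|$.

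To count signed accumulation points I combine this with $(-1)^{N+1}$. Since $\N$ is covered, up to finitely many indices, by residue classes on which the sequence converges, its accumulation points are exactly the corresponding limits. If $\ell$ is even then $2 \mid \ell$, so $N \bmod \ell$ already fixes the parity of $N$; the sign is constant on each class mod $\ell$, yielding at most $\ell$ accumulation points $(-1)^{j+1}\Lambda_j^{-1}$. If $\ell$ is odd the sign is not determined by $N \bmod \ell$, so I refine to residue classes mod $2\ell$: each such class fixes both $N \bmod \ell$ (hence $\Lambda_j$) and the parity of $N$ (hence the sign), and $\lambda_N$ still converges along it, giving at most $2\ell$ accumulation points.

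Finally, for the existence of an accumulation point in $(-\tfrac12, \tfrac12)$ it suffices to produce one phase $j$ with $\Lambda_j > 2$, since then $|\lambda_N^{-1}| \to \Lambda_j^{-1} < \tfrac12$ along $N \equiv j$ and the signed limit lies in $(-\tfrac12, \tfrac12)$. If some periodic partial quotient is $\geq 2$, pick the phase $j$ realizing it: then $\alpha^{[j]} > 2$ and $\beta^*_j > 0$, so $\Lambda_j > 2$. Otherwise every periodic partial quotient equals $1$, so $\ell = 1$, $\alpha^{[0]} = [1,1,1,\ldots] = \frac{1+\sqrt5}{2}$ and $\beta^*_0 = \frac{\sqrt5-1}{2}$, whence $\Lambda_0 = \sqrt5 > 2$. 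Either way the claim follows. The step I expect to be the real work is the convergence of $\frac{q_{N-1}}{q_N}$ along residue classes: making precise that the reversed finite continued fractions $[0, a_N, \ldots, a_1]$ stabilize, via the contraction estimate for the period map, is where the argument must be careful, whereas the periodicity of $\alpha_{N+1}$ and the bookkeeping of signs are routine.
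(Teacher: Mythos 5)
Your proof is correct and follows essentially the same route as the paper's: both rest on the exact formula of \Cref{odhad}, split the sequence into residue classes mod $\ell$ (mod $2\ell$ when $\ell$ is odd) to control the sign $(-1)^{N+1}$, and settle the final claim by the same dichotomy (some periodic partial quotient $\geq 2$ versus the all-ones/golden-ratio case). The only cosmetic differences are that the paper first normalizes to a purely periodic expansion with maximal leading quotient via \Cref{stejne} and treats the convergence of the reversed truncations $[0,a_N,\ldots,a_1]$ as the standard fact that prefixes of an infinite continued fraction converge, whereas you keep the eventually periodic expansion and establish that same convergence by a contraction argument on the period map.
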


\begin{proof}
According to \Cref{stejne} we can assume that the continued fraction of $\alpha$ is purely periodic, i.e., $\alpha = [(a_0,a_1, \ldots,a_{\ell -1})^\omega]$ for some $\ell > 0$, and that the first digit satisfies $a_0 = \max\{a_0,a_1, \ldots, a_{\ell -1}\}$.
Let $D$ denote the set of the accumulation points of $\left ( q_N^2 \bigl( \frac{p_N}{q_N} - \alpha \bigr)\right)_{  N \in \N}$.

Suppose $\ell$ is even.
By \Cref{odhad} and since $\alpha = [(a_0,a_1, \ldots,a_{\ell -1})^\omega]$, it follows that all the elements of $D$ are the limit-points of the sequences $(c^{(j)}_{k})_{k \in \N}$ where
$$
c^{(j)}_{k} = (-1)^{j+k\ell-1} \left ( [a_{j+k\ell}, a_{j+k\ell+1},\ldots] + [0,a_{j+k\ell-1},\ldots,a_1]  \right)^{-1}
$$
for each $j$ with $0 \leq j < \ell$.
As $\ell$ is even, the term $(-1)^{j+k\ell-1}$ equals $(-1)^{j-1}$ and a limit exists.
Thus, $\# D \leq \ell$.

If $\ell$ is odd, we define the number $c^{(j)}_{k}$ for $0 \leq j < 2\ell$ in the same way and the elements of $D$ are exactly the limit-points of the sequences $(c^{(j)}_{2k})_{k \in \N}$.
The term $(-1)^{j+2k\ell-1}$ in the expression of $c^{(j)}_{2k}$ equals again $(-1)^{j-1}$ and a limit exists for all $j$.
Thus, $\# D \leq 2\ell$.

If $a_0 = 1$, then $\alpha = [1^\omega]$, i.e., it is the golden ratio.
We have
$$
\lim_{k \to +\infty} c^{(0)}_{2k} = - \left( [1^\omega] + [0,1^\omega] \right )^{-1} = - \left( \frac{1 + \sqrt 5}{2} + \frac{2}{1 + \sqrt 5} \right )^{-1} = -\frac{1}{\sqrt 5} \geq - \frac{1}{2}.
$$
Thus, in this case, $D \cap \left(-\frac12,\frac12 \right)$ is not empty.

If $a_0 = 2$, then
$$
\left | c^{(0)}_{k} \right | = \Big | [2, a_{k\ell+2},\ldots] + [0,a_{k\ell-1},\ldots,a_1]  \Big | ^{-1} =  \Big | 2 + [0, a_{k\ell+2},\ldots] + [0,a_{k\ell-1},\ldots,a_1]  \Big | ^{-1} \leq \frac12.
$$
It implies that $D \cap \left(-\frac12,\frac12 \right)$ is not empty.
\end{proof}

We add some remarks on the last theorem.
The following observation follows from the last proof: if $\ell$ is odd, then $D$ is symmetric around $0$.

Let $\eta^{(j)} = [( a_{j}, \ldots, a_{j + \ell - 1})^\omega]$.
The number $\eta^{(j)}$ is a reduced quadratic surd and its conjugate $\tilde{\eta}^{(j)}$ satisfies
$$
-\frac{1}{\tilde{\eta}^{(j)}} = [( a_{j + \ell - 1}, \ldots, a_{j})^\omega].
$$
Therefore,
\begin{equation} \label{eq:formula_quadratic}
\lim_{k \to +\infty} c^{(j)}_{2k} = \frac{(-1)^{j-1}}{ \eta^{(j)} - \tilde{\eta}^{(j)}}.
\end{equation}

As follows from the last proof, the bound of \Cref{th:quadratic} is tight.
On the other hand, there exist quadratic numbers such that the bound is not attained.
It suffices to set $\alpha = [(1,2,1,1)^\omega]$.
We have
\begin{align*}
[(1,2,1,1)^\omega] &= \frac25 \sqrt 6 + \frac25 \quad \text{ and } \\
[(1,1,1,2)^\omega] &= \frac25 \sqrt 6 + \frac35 .
\end{align*}
Using \eqref{eq:formula_quadratic} we obtain $\# D < \ell = 4$.
In fact,  $D = \left \{ - \frac{5}{4 \sqrt 6}, \frac{1}{\sqrt 6}, \frac{3}{4 \sqrt 6} \right \}$.

\section{Well and badly approximable numbers} \label{posledni-matematicka}

The search for the best rational approximation of irrational numbers motivates the notion of Markov constant.

\begin{definition}
Let $\alpha$ be an irrational number.
The number
$$
\mu(\alpha) = \inf \left \{ c > 0 \colon \left | \alpha - \frac{k}{m} \right | < \frac{c}{m^2} \text{ has infinitely many solutions } k,m \in \Z  \right \}
$$
is the \textit{Markov constant of $\alpha$}.\\

The number $\alpha$ is said to be {\it well approximable} if $\mu(\alpha) = 0$ and {\it badly  approximable} otherwise.
\end{definition}

We give several comments on the value $\mu(\alpha)$:

\begin{enumerate}
\item Theorem of Hurwitz implies $\mu(\alpha) \leq \frac{1}{\sqrt{5}}$ for any irrational real number $\alpha$.
\item A pair $(k,m)$ which is a solution of $\left | \alpha - \frac{k}{m} \right | < \frac{c}{m^2}$ with  $c\leq \frac{1}{\sqrt{5}}$ satisfies $k=\|m\alpha\|$, where we use the notation $\| x \| = \min \{ | x - n| \colon n \in \Z \}$.
Therefore, \[ \mu(\alpha) = \liminf_{m \to +\infty} m \|m\alpha\| \qquad \text{ and }\qquad
\mu(\alpha) = \min| \S(\alpha) | \]
as the set $\S(\alpha)$ is topologically closed.
\item Due to the inclusion
$\det (g) \S(\alpha) \subset    \S(g\alpha)$ for $g \in \GG$ we can write

$$
|\det (g)|\, \mu(\alpha) \geq \mu(g\alpha).
$$

\item The inequality in \Cref{odhad} implies
\[ \mu(\alpha) = 0 \qquad \Longleftrightarrow \qquad  (a_N)\ \ \ \text{is not bounded }\qquad \Longleftrightarrow \qquad  0\in \S(\alpha). \]
In other words, an irrational number $\alpha$ is well approximable if and only if  the sequence $(a_N)$   of its partial quotients is  bounded.
\end{enumerate}

\subsection{Badly approximable numbers}

As noted above, quadratic irrational numbers serve as an example of badly approximable numbers.
The spectrum $\S(\alpha)$ of such a number has only finite number of elements in the interval $(-\tfrac12, \tfrac12)$.
Theorems \ref{vykousnuto}  and \ref{th:mezi0a1}   give two  examples of spectra of badly approximable numbers of different kinds.

\begin{theorem}\label{vykousnuto}  There exists an irrational number $\alpha$ such that
$\S(\alpha) = (-\infty, -\varepsilon] \cup [\varepsilon, +\infty)$, where $\varepsilon= \frac{\sqrt{2}}{8} \sim 0.18$.
\end{theorem}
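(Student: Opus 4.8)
The plan is to exhibit $\alpha$ explicitly through its continued fraction and to split the claim into the two inclusions $\S(\alpha)\subseteq(-\infty,-\varepsilon]\cup[\varepsilon,+\infty)$ and $\S(\alpha)\supseteq(-\infty,-\varepsilon]\cup[\varepsilon,+\infty)$. The number I would take is $\alpha=[a_0,a_1,a_2,\ldots]$ whose partial quotients $(a_n)_{n\ge1}$ range over the two-letter alphabet $\{1,4\}$ and form a \emph{disjunctive} sequence: every finite word over $\{1,4\}$ occurs infinitely often, and (by a mild arrangement of the enumeration) at starting positions of both parities. Such a sequence is not eventually periodic, so $\alpha$ is a non-quadratic irrational; the value of $a_0$ is irrelevant since $\S(\alpha)=\S(\alpha-\lfloor\alpha\rfloor)$. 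The only numerical input is $1/\varepsilon=8/\sqrt2=4\sqrt2$, and the guiding fact is that for the periodic pattern $[\overline{4,1}]$ the quantity $\beta_N$ of \Cref{odhad} attains exactly the extreme values $\sqrt2$ and $4\sqrt2$.

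For the inclusion ``$\subseteq$'' I would show that no accumulation point lies in $(-\varepsilon,\varepsilon)$, i.e. $\mu(\alpha)\ge\varepsilon$. By \Cref{odhad}, $|q_N^2(p_N/q_N-\alpha)|=1/\beta_N$ with $\beta_N=[a_{N+1},a_{N+2},\ldots]+[0,a_N,\ldots,a_1]$. Over $\{1,4\}$-sequences the forward value $[a_{N+1},a_{N+2},\ldots]$ is maximised, digit by digit, by $[\overline{4,1}]=2+2\sqrt2$, and the backward value $[0,a_N,\ldots,a_1]$ by $[0,\overline{1,4}]=2\sqrt2-2$; as both optima are realised by one and the same bi-infinite pattern $[\overline{4,1}]$, they may be approached simultaneously but never exceeded, so $\beta_N\le(2+2\sqrt2)+(2\sqrt2-2)=4\sqrt2$ for every $N$. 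Hence $1/\beta_N\ge\varepsilon$, and because the smallest value of $m\|m\alpha\|$ is attained on convergents (values below $\tfrac12$ occur only for integer multiples of convergent denominators, where $m\|m\alpha\|=d^2/\beta_N\ge\varepsilon$), we get $\mu(\alpha)=\min|\S(\alpha)|\ge\varepsilon$.

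For ``$\supseteq$'' I would first produce a compact ``seed'' interval and then inflate it by squares. Using disjunctiveness, for any target $t\in[\sqrt2,4\sqrt2]$ one finds finite prefixes forcing the forward and backward parts of $\beta_N$ as close as desired to prescribed admissible values; as these parts are carried by disjoint blocks they can be tuned independently, and their sum sweeps the whole interval $[\sqrt2,4\sqrt2]$. Thus $\{\beta_N\}$ is dense in $[\sqrt2,4\sqrt2]$, so the genuine members $v_N=q_N^2(p_N/q_N-\alpha)=(-1)^{N+1}/\beta_N$ of the defining set are dense in $[-4\varepsilon,-\varepsilon]\cup[\varepsilon,4\varepsilon]$ (both signs available by the parity arrangement), whence their accumulation points fill $[-4\varepsilon,-\varepsilon]\cup[\varepsilon,4\varepsilon]\subseteq\S(\alpha)$. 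Since $\S(\alpha)$ is closed and closed under multiplication by $z^2$ for $z\in\Z\setminus\{0\}$, and since $(z+1)^2\le4z^2$ for $z\ge1$ gives $\bigcup_{z\ge1}z^2[\varepsilon,4\varepsilon]=[\varepsilon,+\infty)$, applying this to both signed seeds yields $\S(\alpha)\supseteq(-\infty,-\varepsilon]\cup[\varepsilon,+\infty)$; together with the first inclusion this is the equality.

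The main obstacle is the extremal upper bound $\beta_N\le4\sqrt2$ together with the matching density statement: both rest on the monotonicity of a continued fraction in each digit and on the compatibility of the two one-sided extrema, and the density also requires the combinatorial construction of a disjunctive $\{1,4\}$-sequence hitting both parities. The one place I would be most careful is verifying that the liminf of $m\|m\alpha\|$ is genuinely attained on convergents, so that the harmless integer-multiple denominators cannot slip a value into $(-\varepsilon,\varepsilon)$.
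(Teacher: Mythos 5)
Your first inclusion, $\S(\alpha)\subseteq(-\infty,-\varepsilon]\cup[\varepsilon,+\infty)$, is sound, and it is essentially the argument the paper leaves implicit: by \Cref{th:12-12} every element of $\S(\alpha)\cap(-\tfrac12,\tfrac12)$ is an accumulation point of the convergent sequence, and by \Cref{odhad} each term of that sequence has modulus $1/\beta_N\geq 1/(4\sqrt2)=\varepsilon$, because over any digit alphabet contained in $\{1,2,3,4\}$ the forward tail is at most $[\overline{4,1}]=2+2\sqrt2$ and the backward tail at most $[0,\overline{1,4}]=2\sqrt2-2$.

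The genuine gap is in the inclusion ``$\supseteq$'', and it is not a missing detail but a step that is false for your choice of alphabet: with partial quotients restricted to $\{1,4\}$, the set $\{\beta_N\}$ is \emph{not} dense in $[\sqrt2,4\sqrt2]$. Tuning the forward and backward tails independently only produces sums $x+y$ with $x$ in a translate of the Cantor set $F_0(\{1,4\})$ and $y\in F_0(\{1,4\})$, i.e.\ points of the sumset $\{1,4\}+F_0(\{1,4\})+F_0(\{1,4\})$, and a sumset of two Cantor sets need not be an interval. Here it certainly is not: $\dim_H F_0(\{1,4\})\approx 0.43<\tfrac12$ (standard cylinder-counting bounds suffice to get below $\tfrac12$), and since addition is Lipschitz, $\dim_H\bigl(F_0(\{1,4\})+F_0(\{1,4\})\bigr)\leq 2\dim_H F_0(\{1,4\})<1$, so the closure of $\{\beta_N\}$ has empty interior and Lebesgue measure zero. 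Combined with your own first inclusion via \Cref{th:12-12}, this shows that for your $\alpha$ the set $\S(\alpha)\cap(-\tfrac12,\tfrac12)$ is a measure-zero Cantor-type set; your construction actually lands in the situation of \Cref{th:mezi0a1}, not of \Cref{vykousnuto}. The endpoints $\sqrt2$ and $4\sqrt2$ are indeed realized by the patterns $\overline{4,1}$ and $\overline{1,4}$, which use only the digits $1$ and $4$ --- but attaining the extremes is not the issue; covering the interior of the interval is. That is precisely the nontrivial input the paper takes from Hall's theorem, equality \eqref{th:4plus40}, $F_0(4)+F_0(4)=[\sqrt2-1,\,4(\sqrt2-1)]$, which requires the full alphabet $\{1,2,3,4\}$: three digits already fail ($F(3)+F(3)\neq\R$ \cite{Di73}), and two digits fail by the dimension count above. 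Replacing your alphabet by $\{1,2,3,4\}$, keeping your disjunctive word (this is the paper's \Cref{le:na4}), and invoking \eqref{th:4plus40} repairs the proof; the remaining steps --- both parities of $N$, the seed $[\varepsilon,4\varepsilon]$, and inflation by the squares $z^2$ --- then go through exactly as you wrote them and as the paper does.
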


We first recall that the natural order on $\R$ is represented by an alternate order in continued fractions.
More precisely, let $x$ and $y$ be two irrational numbers with the continued fractions $[x_0,x_1, \ldots]$ and  $[y_0,y_1, \ldots]$, respectively.
Set $k = \min\{i\in \N \colon x_i\neq y_i\}$.
We have $x<y$ if and only if
$$\bigl(k \ \text{ is even and } \  x_k <y_k\bigr)  \ \ \text{ or } \ \ \  \bigl(k \ \text{ is odd  and } \  x_k >y_k\bigr).$$
To study the numbers with bounded partial quotients we define the following sets:
$$
F(r) = \{ [t, a_1, a_2, \ldots ] \colon t \in \Z, 1 \leq a_i \leq r \}.
$$
and
$$
F_0(r) = \{ [0, a_1, a_2, \ldots ] \colon 1 \leq a_i \leq r \}.
$$
These sets are ``sparse'' and they are Cantor sets: perfect sets that are nowhere dense (see for instance \cite{As99}).
For example, the Hausdorff dimension of $F(2)$ satisfies $0.44 < \dim_H(F(2)) < 0.66$ (see Example 10.2 in \cite{Falconer}).
Taking into account the alternate order, the maximum and minimum elements of $F_0(r)$ can be simply determined.
Thus, $\max F_0(r) = [0,1,r,1,r,1,r,\ldots] $ and $\min  F_0(r) = [0,r,1,r,1,r,1,\ldots] $.
A crucial result which enables us to prove \Cref{vykousnuto} is due to \cite{Ha47} (see also \cite{As99}):
\begin{equation} \label{th:4plus4}
F(4) + F(4) = \R.
\end{equation}
It is worth mentioning that $r=4$ is the least integer for which $F(r) + F(r) = \R$, i.e., in particular, $F(3) + F(3) \neq \R$ (see \cite{Di73}).
Applying  Theorem 2.2 and Lemma 4.2 of \cite{As99} we obtain the following modification of \eqref{th:4plus4}:
\begin{equation}\label{th:4plus40}
F_0(4) + F_0(4) = \left [2  \min F_0({4}), 2\max F_0(4)  \right ] = \left [ \sqrt{2}-1, 4 (\sqrt{2}-1) \right].
\end{equation}
We use the last equality to construct the number $\alpha$ for the proof of \Cref{vykousnuto}.
The construction is based on the following observation.

\begin{lemma} \label{le:na4}
Let ${\bf a} = a_0a_1a_2 \ldots$  be an infinite word over the alphabet  $\mathcal{A} = \{1,2,\ldots,r\}$ such that any finite  string $w_1w_2\cdots w_k$ over the alphabet $\mathcal{A}$ occurs in ${\bf a}$, i.e., there exists index $n \in \N$ such that $a_na_{n+1}\cdots a_{n+k-1} = w_1w_2\cdots w_k$.
Any number $z\in \mathcal{A}+F_0(r) + F_0(r)$ is an accumulation point of the sequence  $(S_{2N})$  and the sequence  $(S_{2N+1})$ with
\begin{equation}\label{ono}S_N=[a_{N+1}, a_{N+2},\ldots] + [0,a_N,a_{N-1},\ldots,a_1].
\end{equation}
\end{lemma}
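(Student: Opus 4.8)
The plan is to read off from the definition that $S_N$ is, up to two unmatched tails, an element of $\mathcal{A}+F_0(r)+F_0(r)$, and then to realize a prescribed $z$ by forcing $\mathbf{a}$ to contain a long central block dictated by $z$. Writing $[a_{N+1},a_{N+2},\ldots]=a_{N+1}+[0,a_{N+2},a_{N+3},\ldots]$, formula \eqref{ono} becomes
\[
S_N=a_{N+1}+[0,a_{N+2},a_{N+3},\ldots]+[0,a_N,a_{N-1},\ldots,a_1],
\]
so $S_N$ has exactly the shape (letter) $+$ (element of $F_0(r)$) $+$ (finite truncation of an element of $F_0(r)$). Fix $z=t+[0,b_1,b_2,\ldots]+[0,c_1,c_2,\ldots]$ with $t\in\mathcal{A}$ and $[0,b_1,\ldots],[0,c_1,\ldots]\in F_0(r)$, and fix $\varepsilon>0$.

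First I would record the relevant continuity of continued fractions: if two continued fractions with partial quotients in $\{1,\ldots,r\}$ agree on their first $L$ entries, then they lie in a common cylinder and hence differ by at most $1/\big(q_L(q_L+q_{L-1})\big)$, where $q_L$ is the $L$-th convergent denominator of the common prefix; since all partial quotients are $\le r$ one has $q_L\to\infty$, so this bound tends to $0$ as $L\to\infty$, uniformly and irrespective of the (possibly different, possibly finite) tails. Consequently, if for a given $N$ one has $a_{N+1}=t$ together with $a_{N+1+j}=b_j$ for $1\le j\le L$ and $a_{N+1-j}=c_j$ for $1\le j\le M$, then comparing $S_N$ with $z$ term by term yields $|S_N-z|\le \delta(L)+\delta(M)$ with $\delta(L),\delta(M)\to 0$; choosing $L,M$ large makes this $<\varepsilon$.

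These matching conditions say precisely that the finite word
\[
u=c_Mc_{M-1}\cdots c_1\,t\,b_1b_2\cdots b_L
\]
occurs in $\mathbf{a}$ starting at position $n=N-M+1$ (so $N=n+M-1$, which also keeps all indices positive once $n\geq 1$). By hypothesis every finite word, hence $u$, occurs in $\mathbf{a}$; applying this to the powers $u^k$ shows $u$ in fact occurs at infinitely many positions $n$, each giving a valid $N\to\infty$ with $|S_N-z|<\varepsilon$. As $\varepsilon$ is arbitrary, $z$ is an accumulation point of $(S_N)_N$.

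The one genuine subtlety, and the step I expect to be the main obstacle, is to obtain $z$ as an accumulation point of \emph{both} the even-indexed and the odd-indexed subsequences, since the parity of $N=n+M-1$ is governed by the (a priori uncontrolled) parity of the occurrence position $n$. To handle this I would choose a short word $y$, of length $1$ or $2$, so that $|u|+|y|$ is odd, and apply the disjunctiveness hypothesis to the word $uyu$: it too occurs infinitely often, and each of its occurrences produces two occurrences of $u$ whose starting positions differ by the odd number $|u|+|y|$, hence have opposite parity. This yields infinitely many valid $N$ of each parity with $|S_N-z|<\varepsilon$, and letting $\varepsilon\to 0$ completes the argument for both $(S_{2N})$ and $(S_{2N+1})$. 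The remaining continued-fraction cylinder estimates are routine, and I would only state them.
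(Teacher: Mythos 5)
Your proposal is correct and follows essentially the same route as the paper's own proof: the same decomposition $S_N = a_{N+1}+[0,a_{N+2},\ldots]+[0,a_N,\ldots,a_1]$, the same device of forcing the central block $c_M\cdots c_1\,t\,b_1\cdots b_L$ to occur in $\mathbf{a}$, and the same limiting cylinder argument. Your $uyu$ construction with $|u|+|y|$ odd is exactly the paper's (terser) step of taking words containing the block as both prefix and suffix to secure occurrences at both parities, so the two proofs coincide in substance.
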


\begin{proof}  Let $x = [0,x_1,x_2,x_3,\ldots] , y = [0,y_1,y_2,y_3,\ldots]  \in F_0(r)$ and $b\in \A$.
For any  string  $w_1w_2\cdots w_k$ there exist infinitely many finite strings  $u_1u_2 \cdots u_{h-1}u_h$ such that  $w_1w_2\cdots w_k$  is  a prefix and   a suffix of $u_1u_2 \cdots u_{h-1}u_h$.
According to our assumptions each of them occurs at least once in $ {\bf a}$.
It means that any string  $w_1w_2\cdots w_k$ occurs in $ {\bf a}$ infinitely many times on both odd and even positions.
In particular, for any $n$ there exists infinitely many odd and  infinitely many  even indices $N$ such that
$$a_{N-n+1}     \cdots    a_{N}a_{N+1}   \cdots   a_{N+n} = x_nx_{n-1}\cdots x_1by_1y_2\cdots y_{n-2}y_{n-1}.$$
Obviously,  the number $S_N$  given by  \eqref{ono}  equals
\[ b+   [0,y_1,y_2,\ldots, y_{n-1}, a_{N+n}, a_{N+n+1}, \ldots ] + [0,x_1,x_2,\ldots, x_n] . \]
As $b+y+x$ is the limit of the previous sequence, it is an accumulation point of the sequence  \eqref{ono}.
\end{proof}

We can complete the proof of  Theorem \ref{vykousnuto}.
\begin{proof}[Proof of \Cref{vykousnuto}]
We construct an infinite word ${\bf a}$ with letters in $\{1,2,3,4\}$ satisfying the assumptions of \Cref{le:na4}.
We define a sequence $(u_n)_{n=0}^{+\infty}$ recursively as follows: $u_0$ is the empty word and $u_n = u_{n-1}v_n$ where $v_n$ is the word which the concatenation of all words over $\{1,2,3,4\}$ of length $n$ ordered lexicographically.
We have
\begin{align*}
u_1 &= 1 \, 2 \, 3 \, 4 \qquad \text{ and }\\
u_2 &= 1 \, 2 \, 3 \, 4 \, 11 \, 12 \, 13 \, 14 \, 21 \, 22 \, 23 \, 24 \, 31 \, 32 \, 33 \, 34 \, 41 \, 42 \, 43 \, 44 .
\end{align*}
As $u_{n-1}$ is a prefix of $u_n$, we can set ${\bf a}$ to be the unique  infinite word which has a prefix $u_n$ for any $n \in \mathbb{N}$.
One can easily see that ${\bf a}$ satisfies the assumptions of \Cref{le:na4}.

Let  $\alpha$ be the number with the continued fraction $[0,a_1,a_2,a_3,\ldots]$, where $a_1a_2a_3 \ldots = {\bf a}$.
Combining \Cref{odhad,le:na4} and the equality \eqref{th:4plus40} we obtain that
$\tfrac{1}{z}$ and $-\tfrac{1}{z}$  belong to $\S(\alpha)$ for any $z \in [b + {\sqrt{2}-1}, b + 4({\sqrt{2}-1}) ]$ with $b \in \{1,2,3,4\}$.
Overall, we obtain
$$\Bigl[ - \tfrac{1}{\sqrt{2}}, -\tfrac{1}{4\sqrt{2}},\Bigr] \cup \Bigl[ \tfrac{1}{4\sqrt{2}}, \tfrac{1}{\sqrt{2}}\Bigr] \subset \S(\alpha).$$
The property that  $\S(\alpha)$ is closed under multiplication by $z^2$ for each positive integer $z$, in particular under   multiplication by $4$,  already   proves  \Cref{vykousnuto}.
\end{proof}

\begin{theorem} \label{th:mezi0a1}
There exists an irrational number $\alpha$ such that the Hausdorff dimension of $\S(\alpha) \cap \left (-\frac{1}{2}, \frac{1}{2} \right)$ is positive but less than 1.
In particular,  $\S(\alpha) \cap \left (-\frac{1}{2}, \frac{1}{2} \right)$  is  an uncountable set and its   Lebesgue measure is $0$.
\end{theorem}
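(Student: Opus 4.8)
The plan is to mirror the construction in the proof of \Cref{vykousnuto}, but to replace the contiguous alphabet $\{1,2,3,4\}$ by a sparse two-letter digit set for which the associated continued-fraction Cantor set is thin. Fix a large integer $N$ and set $C = \{[0,a_1,a_2,\ldots] \colon a_i \in \{1,N\} \text{ for all } i \geq 1\}$. This is a cookie-cutter (self-conformal) Cantor set, so its Hausdorff and box dimensions coincide (see \cite{Falconer,As99}); the dimension is positive because $C$ has two branches, and it tends to $0$ as $N \to +\infty$ because the branch associated with the digit $N$ contracts like $N^{-2}$. Hence we may fix $N$ so large that $0 < \dim_H C < \tfrac12$.

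Next I would build an infinite word ${\bf a}$ over $\{1,N\}$ that contains every finite word over $\{1,N\}$, by the same lexicographic-enumeration recipe used in the proof of \Cref{vykousnuto}, and put $\alpha = [0,a_1,a_2,\ldots]$ with $a_1a_2\cdots = {\bf a}$. The argument of \Cref{le:na4} goes through verbatim for any finite digit set in place of $\{1,\ldots,r\}$, so every element of $\{1,N\} + C + C$ is an accumulation point of both $(S_{2N})$ and $(S_{2N+1})$; conversely every accumulation point of $(S_N)$ lies in $\{1,N\}+C+C$, since the forward tail $[a_{N+1},a_{N+2},\ldots]$ lies in $\{1,N\}+C$ and the backward tail $[0,a_N,\ldots,a_1]$ accumulates in $C$. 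Combining this with \Cref{odhad} and \Cref{th:12-12}, and noting that the map $z \mapsto 1/z$ is bi-Lipschitz on the range of $(S_N)$ (the bounded partial quotients keep $S_N$ away from $0$ and from $+\infty$), we obtain that $\S(\alpha) \cap (-\tfrac12,\tfrac12)$ is, up to a bi-Lipschitz change of variable and reflection, the set $(\{1,N\}+C+C)\cap(2,+\infty)$.

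It then remains to squeeze the dimension from both sides. For the upper bound, the addition map is Lipschitz and $\dim_H C = \dim_B C$, so $\dim_H(C+C) \le \dim_H(C\times C) \le \dim_H C + \dim_B C = 2\dim_H C < 1$; adjoining the finite set $\{1,N\}$ and applying the bi-Lipschitz change of variable leaves the dimension unchanged, whence $\dim_H\bigl(\S(\alpha)\cap(-\tfrac12,\tfrac12)\bigr) < 1$. For the lower bound, $C+C$ contains the translate $c_0 + C$ for any fixed $c_0 \in C$ (realised, e.g., inside $N+C+C \subset (2,+\infty)$), so the dimension is at least $\dim_H C > 0$. Thus the dimension lies strictly between $0$ and $1$; positivity forces the set to be uncountable, and dimension strictly below $1$ forces Lebesgue measure $0$. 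I expect the main obstacle to be the upper bound: one must verify that the sparse digit set can be chosen so that $\dim_H C < \tfrac12$ — which fails for every contiguous alphabet $\{1,\ldots,r\}$, since $\dim_H F_0(2)$ already exceeds $\tfrac12$ — and must invoke the equality of Hausdorff and box dimension to keep the sumset $C+C$ below dimension $1$.
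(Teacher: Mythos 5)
Your architecture is essentially the paper's: encode a ``universal'' infinite word over a two-letter alphabet into the continued fraction of $\alpha$, use \Cref{odhad}, the evident generalization of \Cref{le:na4}, and \Cref{th:12-12} to identify $\S(\alpha)\cap\left(-\tfrac12,\tfrac12\right)$ with a bi-Lipschitz image of the sumset $\{\text{alphabet}\}+C+C$, and then bound the Hausdorff dimension of that sumset from both sides. Two of your deviations are correct and even attractive: the lower bound via the translate $c_0+C\subset C+C$ is simpler than the paper's appeal to Astels' Theorem~1.2 (which gives $\dim_H(F+F)\ge 0.263$), and the upper bound via $\dim_H(C+C)\le \dim_H C+\dim_B C=2\dim_H C$ is a legitimate alternative to the paper's direct covering of $Z_n+Z_n$ by $4^n$ intervals.

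The genuine gap is the unproved claim $0<\dim_H C<\tfrac12$ for $C=\left\{[0,a_1,a_2,\ldots]\colon a_i\in\{1,N\}\right\}$, which you support only by ``the digit-$N$ branch contracts like $N^{-2}$''. That reasoning ignores the digit-$1$ branch $f_1(x)=\frac{1}{1+x}$, whose Lipschitz constant on $I=[\min C,\max C]$ is $(1+\min C)^{-2}$; since $\min C=[0,(N,1)^\omega]\approx \tfrac1N\to 0$, this constant tends to $1$. Consequently the natural covering estimate --- the very one the paper uses --- bounds $\dim_H C$ only by the root $s^*$ of $(1+m)^{-2s}+(N+m)^{-2s}=1$ with $m=\min C$, and at $s=\tfrac12$ the left-hand side equals $\frac{1}{1+m}+\frac{1}{N+m}=1+\frac{2}{N^2}+O(N^{-3})>1$, so $s^*>\tfrac12$ for every $N$: the branch-ratio bound lands on the wrong side of $\tfrac12$. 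Indeed, a genuinely self-similar set with these two contraction ratios has dimension tending to $\tfrac12$, not to $0$, so no argument using only per-branch contraction ratios can prove your claim. The claim itself is true ($\dim_H C\to 0$ as $N\to+\infty$), but its proof must exploit the non-uniform contraction of $f_1$, whose iterates contract like $\varphi^{-2k}$ with $\varphi$ the golden ratio --- for instance by passing to the second-iterate system, where $\left|(f_1\circ f_1)'(x)\right|=\frac{1}{(2+x)^2}\le \tfrac14$, so that all four level-two branches have ratio at most $\max\left(\tfrac14, N^{-2}\right)$ and the covering argument closes for $N\ge 5$; nothing of this sort appears in your proposal, and it is exactly the crux you flagged at the end. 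Note how the paper sidesteps the difficulty entirely: its alphabet is $\{4,5\}$ --- contiguous, but containing no small digits --- so both branches have Lipschitz constant at most $(4+\min F)^{-2}<\tfrac{1}{16}$, and the crude covering by $2^n$ intervals (resp.\ $4^n$ for $F+F$) of length $|I|L^n$ immediately gives $\dim_H(F+F)\le \frac{\log 2}{\log(4+\min F)}<\tfrac12$. Your closing observation that alphabets $\{1,\ldots,r\}$ cannot work is correct, but it led you astray: the alphabet need not be sparse, it merely must exclude the digits $1$, $2$, $3$.
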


\begin{proof}
Let $\aa$ be an infinite word with letters in $\{4,5\}$ such that it contains any finite string over $\{4,5\}$ infinitely many times.
A word with such properties can be constructed in the same way as in the proof of \Cref{vykousnuto}.

In accordance with the previous notation we set
\[
F_0(\{4,5\}) = \{ [0, a_1, a_2, \ldots ] \colon a_i \in \{4,5\} \}.
\]
To simplify, we write $F = F_0(\{4,5\})$.
Theorem 1.2 in \cite{As99} implies that
\[
\dim_H ( F+F ) \geq 0.263 \ldots
\]

To obtain an upper bound on the Hausdorff dimension of $F+F$, we first give a construction of $F$.
Let $I$ denote the interval $I = [\min F,\max F]$.
Clearly,   $F\subset I$.

For both letters $z =4$ and $z=5$  we define $f_z: I \to I$ as follows:
\[
f_z(x) = \frac{1}{z + x} \text{ for all } x \in I.
\]
Using the mean value theorem, one can easily derive that
\[
\frac{ | f_z(x) - f_z(y) | }{ | x - y |} \leq \max_{\xi \in I} |f'(\xi)| \leq L := \frac{1}{(\min F+4)^2}
\]
for all $x,y \in I$, $x \neq y$.
Thus, the mappings $f_4$ and $f_ 5$ are contractive and one can see that $F$ is the fixed point of the iterated function system generated by these  mappings.
In other words, we have
\[
F = \lim_{n \to +\infty} Z_n \quad \quad \text{ with } \quad Z_n = \bigcup_{a_1a_2 \cdots a_n \in \{4,5\}^n} f_{a_1} f_{a_2} \cdots f_{a_n} (I).
\]

Let us stress that $\lim_{n \to +\infty}$ on the previous row is defined via the Hausdorff metric on the space of compact subsets of $\R$.

Let $n \in \N$.
It follows that there exists a covering of the set $Z_n$ consisting of $2^n$ intervals of length at most $|I|\cdot L^n$.
Similarly, the set $Z_n + Z_n$ can be covered by $4^n$ intervals of length at most $ |I| \cdot L^n$.
Since $F + F = \lim_{n \to +\infty} Z_n + \lim_{n \to +\infty} Z_n = \lim_{n \to +\infty} ( Z_n + Z_n)$ and $Z_{n+1} \subset Z_n$, we can use this covering to estimate the Hausdorff dimension of $F + F$ (see \cite{Falconer}, Proposition 4.1) as follows:
\[
\dim_H( F + F ) \leq \lim_{n \to +\infty} \frac{ \log 4^n }{ - \log (|I| \cdot L^n) } = -\frac{\log 4}{\log L} = \frac{\log 2}{\log (4+\min F)} .
\]
 As
 $\min F = [0,(5,4)^{\omega}]$, we obtain $\min F = \frac{1}{5+\tfrac{1}{4+\min F}}$.
 Thus $\min F = 2\bigl( \sqrt{\tfrac{6}{5}}-1\bigr)$ and we deduced the upper bound
\[
\dim_H( F + F ) \leq    \frac{\log 2}{\log 2 + \log \bigl(\sqrt{\tfrac{6}{5}}+1\bigr)} <\tfrac12 .
\]

The rest of the proof is analogous to the end of the proof of \Cref{vykousnuto}.
We use \Cref{odhad} and an analogous modification of \Cref{le:na4} for the alphabet $\A = \{4,5\}$ to obtain that
\[
\pm \frac{1}{x}  \in  \S(\alpha)\quad \text{for each } \  x \in \{4,5\} + F + F .
\]
By \Cref{th:12-12} we have
\[
\Bigl\{ \frac{1}{x} \colon  \ | x| \in \{4,5\} + F + F  \Bigr\} \cap \left (-\frac{1}{2}, \frac{1}{2} \right)= \S(\alpha) \cap \left (-\frac{1}{2}, \frac{1}{2} \right).
\]
Clearly, the union of the four sets $4+F + F$,   $5+F + F$, $-4-F - F$, and  $-5-F - F$   with the same Hausdorff dimension is a set of the same dimension.
Moreover,  the Hausdorff dimensions of    $f(M)$ and $M$ coincide for any  continuous mapping $f$, in particular for $f(x)= \tfrac1x$.   It implies that  the estimates on the Hausdorff dimension of $F+F$ are valid also for $\S(\alpha) \cap \left (-\frac{1}{2}, \frac{1}{2} \right)$.
\end{proof}

\subsection{Well approximable numbers}

We consider $\alpha = [a_0,a_1,a_2,\ldots]$  with unbounded partial quotients.
Using second convergents defined in \eqref{secondconvergents} we can write  for any $N\in \N$ and $a\in \N$ with $1 \leq a <a_{N+1}$
\begin{equation}\label{secondUse}
({aq_N+q_{N-1}})^2\Bigl(\frac{ap_N+p_{N-1}}{aq_N+q_{N-1}} - \alpha \Bigr) =(-1)^{N+1} \Bigl(a+ \frac{q_{N-1}}{q_{N}}\Bigr)  \frac{\alpha_{N+1}-a}{\alpha_{N+1} +\frac{q_{N-1}}{q_{N}}} .
\end{equation}
Recall that $\alpha_{N+1} = [a_{N+1}, a_{N+2},a_{N+3}, \ldots]$.
Let $(i_N)$ be a strictly increasing sequence of integers such that $\lim_{N\to +\infty} a_{1+i_N} = +\infty$.
Clearly, $\lim_{N\to +\infty}\alpha_{1+i_N} = +\infty$.
Let us fix $a \in \N$ and put $k_N=ap_{i_N}+p_{i_N-1}$ and $m_N=aq_{i_N}+q_{i_N-1}$, we have
\[
m_N^2 \left| \frac{k_N}{m_N} - \alpha \right| =  \Bigl(a+ \frac{q_{i_N-1}}{q_{i_N}}\Bigr)  E_N
\]
where we set $ E_N = \frac{\alpha_{i_N+1}-a}{\alpha_{i_N+1} +\frac{q_{i_N-1}}{q_{i_N}}}$.
Obviously, $\lim\limits_{N\to +\infty}E_N = 1$.
Since the sequence $(q_N)$ is a strictly increasing sequence of integers, the ratio    $ \frac{q_{i_N-1}}{q_{i_N}}$ belongs to  $(0,1)$.
This implies that the sequence $m_N^2 \left| \frac{k_N}{m_N} - \alpha \right|$ has at least one accumulation point in the interval $[a,a+1]$.
Therefore we can conclude the next lemma.

\begin{lemma}
Let $\alpha$ be an irrational well approximable number.
For any $n \in \N$ the interval $[n,n+1]$ or the interval $[-n-1,-n]$ has a non-empty intersection with $\S(\alpha)$.
\end{lemma}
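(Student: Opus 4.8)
The plan is to finish the estimate already set up in the paragraph preceding the statement. There, for each fixed $a\in\N$, the secondary convergents $k_N=ap_{i_N}+p_{i_N-1}$ and $m_N=aq_{i_N}+q_{i_N-1}$ (with $(i_N)$ the strictly increasing subsequence along which $a_{1+i_N}\to+\infty$, so that the formula \eqref{secondUse} applies once $a<a_{1+i_N}$) were shown to satisfy $m_N^2\bigl|\tfrac{k_N}{m_N}-\alpha\bigr|=\bigl(a+\tfrac{q_{i_N-1}}{q_{i_N}}\bigr)E_N$ with $E_N\to1$ and $\tfrac{q_{i_N-1}}{q_{i_N}}\in(0,1)$, so that this unsigned sequence has an accumulation point $c\in[a,a+1]$. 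Taking $a=n$, it remains only to upgrade this statement about absolute values into a genuine membership in the \emph{signed} set $\S(\alpha)$.

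First I would read off from \eqref{secondUse} the sign of the unsigned quantity: since $a+\tfrac{q_{i_N-1}}{q_{i_N}}>0$ and $E_N\to1>0$, for all large $N$ one has $m_N^2\bigl(\tfrac{k_N}{m_N}-\alpha\bigr)=(-1)^{i_N+1}\,m_N^2\bigl|\tfrac{k_N}{m_N}-\alpha\bigr|$. Fixing a subsequence along which the modulus converges to $c$, the alternating factor $(-1)^{i_N+1}$ is then the only remaining threat to convergence of the signed sequence.

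The key step is a parity refinement. Among the infinitely many indices of this subsequence, infinitely many share the same parity of $i_N$; passing to such a sub-subsequence fixes the sign $\varepsilon=(-1)^{i_N+1}\in\{+1,-1\}$, and along it $m_N^2\bigl(\tfrac{k_N}{m_N}-\alpha\bigr)\to\varepsilon c$. Because $(q_N)$ and $(i_N)$ are strictly increasing, $(m_N)$ is strictly increasing, and (after discarding finitely many terms, using $k_N/m_N\to\alpha$ and $\alpha\neq 0$) $(k_N)$ may be taken strictly monotone as well, so this limit exhibits $\varepsilon c$ as a genuine element of $\S(\alpha)$. As $c\in[n,n+1]$, we obtain $c\in[n,n+1]\cap\S(\alpha)$ when $\varepsilon=+1$ and $-c\in[-n-1,-n]\cap\S(\alpha)$ when $\varepsilon=-1$; either way one of the two intervals meets $\S(\alpha)$, which is the assertion.

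I expect the only genuine obstacle to be exactly this sign bookkeeping: the preceding discussion delivers an accumulation point of the unsigned sequence, and one must ensure that the factor $(-1)^{i_N+1}$ does not destroy convergence of the signed sequence. The constant-parity refinement is precisely what removes this difficulty, and everything else follows immediately from \eqref{secondUse} together with the monotonicity properties of the convergents already established.
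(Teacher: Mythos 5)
Your proposal is correct and follows essentially the same route as the paper: the paper's proof \emph{is} the preceding paragraph on secondary convergents and \eqref{secondUse}, with the passage from the unsigned accumulation point in $[a,a+1]$ to the signed statement left implicit. Your constant-parity refinement (and the check that the sequences $(k_N)$, $(m_N)$ can be taken strictly monotone) just makes explicit the bookkeeping that the paper's ``Therefore we can conclude the next lemma'' glosses over, and it correctly explains why the conclusion must be the disjunction over the two intervals.
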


\newcommand{\euler}{ \mathrm e}
\begin{example}
Unlike  the number $\pi$, the continued fraction of the Euler constant  has a regular structure:
$$\euler = [2, 1, 2, 1, 1, 4, 1, 1, 6, 1, 1, 8, 1, 1, 10, \ldots].$$
In general, for $\euler = [2,a_1,a_2, \ldots ]$ we have
\[
a_{3n+1}=1,  \quad   a_{3n+2}=2(n+1) \quad \text{ and } \quad a_{3n+3}=1  \quad \text{ for any } n \in \N.
\]
We demonstrate that
$$
\left|q_{3N}^2 \Bigl( \frac{p_{3N}}{q_{3N}} - \mathrm e \Bigr)\right| \to \tfrac12 .
$$
By Lemma \ref{odhad} we need to show
$$ A_{3N} :=[a_{3N+1}, a_{3N+2},a_{3N+3},\ldots] + [0,a_{3N},a_{3N-1},\ldots,a_1] \to 2 .$$
Using the simple estimate valid for any  continued fraction
$$
 b_0 + \frac{1}{b_1+ \frac{1}{b_2}}\ < \  [b_0,b_1,b_2,b_3,\ldots]\  < \ b_0 + \frac{1}{b_1+ \frac{1}{1+b_2}}
$$
we obtain the following bounds:
$$1+ \frac{1}{2(N+1) + 1} + \frac{1}{1+ \frac{1}{{2N}} }< A_{3N} <   1+ \frac{1}{{2(N+1)}+ \frac{1}{2}} + \frac{1}{1+ \frac{1}{{2N}+1} } .$$
Both bounds have the same limit, namely $2$,  as we wanted to show.
Analogously one can deduce that
$$
\left|q_{3N-1}^2 \Bigl( \frac{p_{3N-1}}{q_{3N-1}} - \mathrm e \Bigr)\right|\to \tfrac12   \quad \text{ and } \quad  \left|q_{3N+1}^2 \Bigl( \frac{p_{3N+1}}{q_{3N+1}} - \mathrm \euler \Bigr)\right|\to 0 .
$$
Since $(-1)^{3N}$ takes positive and negative signs, the values $0, \pm \tfrac12$ belong to the spectrum of $\euler$ and moreover
$$(-\tfrac12, \tfrac12)\cap \S(\euler)=\{0\} .$$
As $a_{3N+2} =2N> 1$, we can use the second convergents as well and for any fixed  $a \in \mathbb{N}$ and  any $N$ such that $a < a_{3N+2}$  we obtain
$$({aq_{3N+1}+q_{3N}})^2\Bigl|\euler - \frac{ap_{3N+1}+p_{3N}}{aq_{3N+1}+q_{3N}}\Bigr| = \Bigl(a+ \frac{q_{3N}}{q_{3N+1}}\Bigr)  E_{3N},
$$
where   $\lim\limits_{N\to \infty}E_{3N} = 1$, cf. \eqref{secondUse}.
By the proof of \Cref{odhad}, we have
$$ \frac{q_{3N}}{q_{3N+1}} =  [0,a_{3N+1},a_{3N}, a_{3N-1},\ldots,a_1]  \to \tfrac12. $$
We conclude  for the spectrum of the Euler number satisfies
$$ \{0\} \cup \{a+\tfrac12: a \in \mathbb{Z}\}\subset \S(\euler)\,.$$
Of course, the inclusion cannot be replaced by an equality.
The reason is simple; the spectrum is closed under multiplication by the factor $4$,  and thus
$$\{4a+2: a \in \mathbb{Z}\}\subset \S(\euler)$$
as well.
\end{example}

\begin{theorem}\label{vse}  There exists an irrational number $\alpha$ such that
$\S(\alpha) =\R$.
\end{theorem}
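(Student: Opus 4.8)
The plan is to reuse the construction behind \Cref{vykousnuto}, but to replace the finite alphabet $\{1,2,3,4\}$ by the full alphabet $\N$ of all positive integers, which is exactly what is needed to produce values of $\S(\alpha)$ arbitrarily close to $0$. First I would build an infinite word $\aa = a_1a_2a_3\cdots$ over $\N$ in which every finite string over $\N$ occurs infinitely often and, moreover, on positions of both parities; this is done as in the proof of \Cref{vykousnuto}, concatenating blocks $v_n$ that list (lexicographically) all words of length $n$ over $\{1,\ldots,n\}$ and letting both the length and the alphabet grow, while the prefix/suffix extension trick from the proof of \Cref{le:na4} guarantees occurrences on both even and odd positions. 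Set $\alpha = [0,a_1,a_2,a_3,\ldots]$. Since $\aa$ contains arbitrarily large letters, the partial quotients of $\alpha$ are unbounded, so $\alpha$ is well approximable and $0\in\S(\alpha)$.

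The core step is an analogue of \Cref{le:na4} for the infinite alphabet. For the full alphabet the set of tails $[0,b_1,b_2,\ldots]$ is precisely the set of irrationals in $(0,1)$, so every real $z>1$ can be written as $z = b + x + y$ with $b\in\N$ and $x=[0,x_1,x_2,\ldots]$, $y=[0,y_1,y_2,\ldots]$ irrational in $(0,1)$. Fixing such a decomposition and using that $\aa$ contains the central string $x_nx_{n-1}\cdots x_1\, b\, y_1\cdots y_{n-1}$ infinitely often on both parities, the estimate of \Cref{le:na4} shows that $z$ is an accumulation point of both $(S_{2N})$ and $(S_{2N+1})$, where $S_N = [a_{N+1},a_{N+2},\ldots] + [0,a_N,\ldots,a_1]$. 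Here the one-sided tail $[a_{N+1},a_{N+2},\ldots]$ is pinned near $b+y$ and the finite reversed prefix $[0,a_N,\ldots,a_1]$ near $x$, irrespective of the unmatched digits, by the standard convergence of continued fractions.

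By \Cref{odhad} the convergent value satisfies $q_N^2\bigl(\tfrac{p_N}{q_N}-\alpha\bigr) = (-1)^{N+1}S_N^{-1}$, and since the accumulation of $S_N$ at $z$ occurs on both parities, both $+\tfrac1z$ and $-\tfrac1z$ lie in $\S(\alpha)$. Letting $z$ range over $(1,+\infty)$ gives $\{\pm\tfrac1z : z>1\}\subset\S(\alpha)$, a set whose closure is $[-1,1]$ (note $0$ is included automatically as $\tfrac1z\to 0$). Because $\S(\alpha)$ is topologically closed we obtain $[-1,1]\subset\S(\alpha)$, and because $\S(\alpha)$ is closed under multiplication by $z^2$ for every $z\in\Z$ we get $[-z^2,z^2] = z^2[-1,1]\subset\S(\alpha)$ for all $z$; taking the union over $z\in\N$ yields $\R\subset\S(\alpha)$, hence $\S(\alpha)=\R$.

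I expect the main obstacle to be the careful formulation and proof of the infinite-alphabet version of \Cref{le:na4}: one must verify that matching only a finite central block of digits forces both the infinite forward tail and the finite reversed prefix to stay within a prescribed error of $b+y$ and of $x$ respectively, uniformly in the uncontrolled digits, and that the construction of $\aa$ genuinely places every finite string over $\N$ infinitely often on positions of each parity. Everything after that is the routine passage through \Cref{odhad} and the two closure properties of $\S(\alpha)$.
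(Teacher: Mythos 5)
Your proof is correct, but at the decisive covering step it takes a genuinely different route from the paper. The paper's proof of \Cref{vse} keeps the continued-fraction tails over the bounded alphabet $\{1,2,3,4\}$ and inserts a single arbitrary central letter $N>1$; the accumulation points it produces are of the form $N+x+y$ with $x,y\in F_0(4)$, and to know these fill whole intervals it must invoke the Hall--Astels sum-set theorem \eqref{th:4plus40}, $F_0(4)+F_0(4)=[\sqrt2-1,\,4(\sqrt2-1)]$; the union of the resulting intervals over $N$, together with closedness of $\S(\alpha)$ and invariance under multiplication by squares, then gives $\R$. You instead work over the full alphabet $\N$, where the set of possible tails is simply all irrationals in $(0,1)$, so the decomposition $z=b+x+y$ for every $z>1$ is elementary and no sum-set theorem of Hall type is needed at all. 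The price you identify — an infinite-alphabet version of \Cref{le:na4} — is essentially no extra cost: the estimate underlying \Cref{le:na4} is that two continued fractions agreeing in their first $n$ partial quotients differ by at most $1/q_n^2$, and $q_n$ grows at least at Fibonacci rate regardless of any bound on the quotients, so matching a central block pins both the forward tail and the reversed prefix uniformly in the uncontrolled digits; moreover, the paper's own proof already relies implicitly on exactly this kind of extension, since its word also contains arbitrarily large letters $N$ and its tails must be pinned across them. So your argument is, if anything, more self-contained: it trades the one deep external input (Hall's theorem) for a density observation, while your universal word over $\N$ (blocks $v_n$ listing all words of length $n$ over $\{1,\dots,n\}$, plus the prefix/suffix parity trick) and the final passage through \Cref{odhad}, the closedness of $\S(\alpha)$, and closure under multiplication by $z^2$ are all sound and mirror the paper's scheme.
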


\begin{proof}  Suppose that ${\bf a}=a_1a_2\ldots $ is an infinite word such that any   sequence of  the form $w_1w_2\ldots w_kNw_{k+1}w_{k+2} \cdots w_{2k}$,  where the symbols  $w_i$  are from  the alphabet  $\{1,2,3,4\}$
 and $N>1, N\in \Z$   occurs in ${\bf a}$ infinitely many times.
 The same reasoning as in the  proofs of \Cref{le:na4} and \Cref{vykousnuto}   together with the equality \eqref{th:4plus40} imply the statement of the theorem.
 Therefore, it is enough to describe ${\bf a}$.

Fix $n \in \mathbb{N}$ and consider a word $w=w_1w_2\cdots w_n$ of length $n$ over the alphabet $\{1,2,3,4\}$.
${\rm Copy}(w)$ denotes the concatenation of $n$ words of length $(n+1)$, each in the form  $wh=w_1w_2\cdots w_nh$ with $h=1,2,\ldots, n$. Thus ${\rm Copy}(w)$ is a word of length $n(n+1)$.
The word $v_n$ is created by concatenation of ${\rm Copy}(w)$ for all  words $w$ of length $n$ over the alphabet      $\{1,2,3,4\}$.
In particular, the length of $v_n$ is $4^nn(n+1)$.

The infinite word ${\bf a}$ is given by its prefixes $(u_n)$ which are constructed recursively:   $u_0$ is   the empty word and $u_n = u_{n-1}v_n$.
\end{proof}

\begin{remark}
We note that the behaviour of $\alpha = [a_1, a_2, \ldots]$ defined in the proof of the previous theorem is typical.
In  \cite{BoJaWi83},  Bosma, Jager and Wiedijk described the distribution of the sequence $q_n|p_n - \alpha q_n|$.
A direct consequence of their result is  that $\S(\alpha) =\R$ for almost all $\alpha \in [0,1]$.
Thus, it is also true for almost all $\alpha \in \R$.
\end{remark}

\section{Discussion and remarks} \label{konec}

In the context of physics it must be emphasized that our choice of the elementary model (\ref{presm}) + (\ref{represm}) is motivated not only by its appealing number-theoretical properties but also by its possible straightforward phenomenological applicability.
We feel motivated by the persuasion that the related {\em constructive} exemplification of certain spectral anomalies might prove attractive even from the point of view of a physicist who need not necessarily care about the deeper mathematical subtleties.

Using our purely mathematical tools we are able to arrive at a better understanding of certain purely formal connections between various structural aspects of the spectra, with the main emphasis put on its unboundedness from below (which could result into instabilities under small perturbations) in an interplay with the emergence of accumulation points in the {point spectrum} (in the latter case it makes sense to keep in mind the existing terminological ambiguities \cite{SK}).

Needless to add that the phenomenological role of the spectral accumulation points remains strongly model-dependent (see the rest of this section for a few samples).
In the most elementary quantized hydrogen atom, for example, such a point represents just an entirely innocent lower bound of the continuous spectrum.
A more interesting interpretation of these points is obtained in the case of the so called Efimov three-body bound states \cite{Efimov,Efimov2}, etc.

\subsection{The context of systems with position-dependent mass}

Irrespectively of the concrete physical background of quantum stability \cite{cinani}, its study encounters several subtle mathematical challenges \cite{Behrndt,Behrndt2,Behrndt3,Behrndt4}.
In our present hyperbolic-operator square-well model living on a compact domain $R$, a number of interesting spectral properties is deduced and proved by the means and techniques of mathematical number theory, without any recourse to the abstract spectral theory.
Still, the standard spectral theory is to be recalled.
For example, once we return to the explicit units we may reinterpret our present {\em hyperbolic} partial differential operator $\Box$ in Eq.~(\ref{presm}) as a result of a drastic deformation of an {\em elliptic} non-equal-mass Laplacean
 \begin{equation}
 \triangle= \frac{1}{2m_x}\,\frac{\partial^2}{\partial x^2}
 +\frac{1}{2m_y}\,\frac{\partial^2}{\partial y^2}\,
 \end{equation}
or rather of an even more general kinetic-energy operator
 \begin{equation}
 T(x,y)= \frac{1}{2m_x(x,y)}\,\frac{\partial^2}{\partial x^2}
 +\frac{1}{2m_y(x,y)}\,\frac{\partial^2}{\partial y^2}\,
 \end{equation}
containing the position-dependent positive masses.
In the ultimate and decisive step one simplifies the coordinate dependence in the masses $m_{x,y}(x,y)$ (say, to piecewise constant functions) and, purely formally, allows one of them to become negative.

In such a setting our present mathematical project is also guided by the specific position-dependent mass physical projects of Refs.~\cite{GeZa,levai2009scattering} inspired, in their turn, by the non-Hermitian (a.k.a. ${\cal PT}-$symmetric \cite{0034-4885-70-6-R03}) version of quantum Kepler problem.
In these papers the mass $m(x)$ is allowed to be complex and, in particular, negative.
In \cite{GeZa} the onset of the spectral instability is analyzed as an onset of an undesirable unboundedness of the discrete spectrum from below.
A return to a stable system with vacuum is then shown controllable only via an energy-dependent mass $m(x,E)$, i.e., via an {\em ad hoc} spectral cut-off (cf. also \cite{taky}).

\subsection{The context of generalized quantum waveguides}

Before one recalls the boundary conditions (\ref{represm}), the majority of physicists would perceive our hyperbolic partial differential Eq.~(\ref{presm}) as the Klein-Gordon equation describing the free relativistic one-dimensional motion of a massive and spinless point particle.
Whenever one adds an external (say, attractive Coulomb) field, the model becomes realistic (describing, say, a pionic atom).
Now, even if we add the above-mentioned Dirichlet boundary conditions $f|_{\partial R} = 0$, a certain physical interpretation of the spectrum survives the characterization of, say, the bound states in a ``relativistic quantum waveguide''.

One of the most interesting consequences of the latter approach may be seen in the possibility of a collapse of the system in a strong
field.
The most elementary illustrations of such a type of instability may even remain non-relativistic: Landau and Lifshitz \cite{Landau} described the phenomenon in detail.
Another, alternative, type of quantum instabilities connected with the emergence of spectral accumulation points occur also in Horava-Lifshitz gravity with ghosts \cite{Hawking,ghost} or in the conformal theories of gravity \cite{ghostc,conformalgra,ghostd} etc.

Our present choice of the elementary illustrative example with compact and rectangular $R$ changes the physics and becomes more intimately related to the problems of the so called quantum waveguides with impenetrable walls \cite{Exnerkniha}.
Most of the mathematical problems solved in the latter context are very close to the present ones.
Typically, they concern the possible relationship between the spectra and geometry of the spatial boundaries.
In this setting, various transitions to the infinitely thin and/or topologically nontrivial domains $R$ (one may then speak about quantum graphs) and, possibly, also to the various anomalous point-interaction forms of the interactions are being also studied \cite{hexe}.

Up to now, people only very rarely considered a replacement of the positive-definite kinetic-energy operator (i.e., Laplacean) by its hyperbolic alternative.
Thus, in spite of some progress \cite{Davidi}, such a ``relativistic'' generalization of the concept of quantum waveguide and/or of quantum graph still remains to be developed.

\subsection{The context of classical optical systems with gain and loss} \label{opreceding}

One of the most characteristic features of modern physics may be seen in the multiplicity of overlaps between its apparently remote areas. {\it Pars pro toto} let us mention here the unexpected productivity of the transfer of several quantum-theoretical concepts beyond the domain of quantum theory itself \cite{MZbook}.
One of the best known recent samples of such a transfer starts in quantum field theory \cite{PhysRevD.55.R3255} and ends up in classical electrodynamics \cite{0295-5075-81-1-10007}.
A common mathematical background consists in the requirements of the Krein-space self-adjointness \cite{Kuzhelinbook} {\it alias} parity-times-time-reversal symmetry (${\cal PT}-$symmetry).

It is worth adding that the latter form of a transfer of ideas already proceeded in both directions.
The textbook formalism of classical electrodynamics based on Maxwell equations was enriched by the mathematical techniques originating in spectral theory of quantum operators in Hilbert space (cf., e.g., section 9.3 of the review paper \cite{doi:10.1142/S0219887810004816} for more details).
In parallel, the ${\cal PT}-$symmetry-related version of quantum theory (cf. also its older review \cite{0034-4885-70-6-R03}) took an enormous profit from the emergence and success of its experimental tests using optical metamaterials \cite{meat}. People discovered that the time is ripe to think about non-elliptic versions of Maxwell equations reflecting the quick progress in the manufacture of various sophisticated metamaterials which possess non-real elements of the permittivity and/or permeability tensors \cite{DeLa04,Kriegler,Tanaka,Soukolis}.

Naturally, the mutual enrichments of the respective theories would not have been so successful without the progress in experimental techniques, and {\it vice versa}.
In fact, the availability of the necessary optical metamaterials (which can simulate the ${\cal PT}-$symmetry of quantum interactions via classical gain-loss symmetry of prefabricated complex refraction indices) was a highly nontrivial consequence of the quick growth of the know-how in nanotechnologies \cite{BAS88,ShSmSc}.
In opposite direction, the experimental simulations of various quantum loss-of-stability phenomena in optical metamaterials encouraged an intensification of the related growth of interest in the questions of stability of quantum systems with respect to perturbations \cite{Davies_2007,Chen20081986,Zn12}.

\subsection{The context of unbounded spectra}

Our last comment on the possible phenomenological fructification of our present study of the toy model (\ref{presm}) + (\ref{represm}) concerns its possible, albeit purely formal, connection to the traditional Pais-Uhlenbeck (PU) oscillator \cite{PU}.
The idea itself is inspired by the Smilga's paper \cite{Smilgasigma} which provides us with a compact review of the appeal of the next-to-elementary PU model in physics.

We imagine, first of all, that the unboundedness of the spectrum of the PU oscillators parallels the same ``threat of instability'' feature of our rectangular model.
At the same time, in the broad physics community, the PU oscillator is much more widely accepted as a standard model throwing a new light on several methodical aspects of the loss of stability, especially in the context of quantum cosmology and quantization of gravity (cf. also \cite{pugra,pugrab,pugrac}).
In particular, the PU model contributes to the understanding of the role of renormalizability in higher-order field theories~\cite{Hawking,Kaparulin}, etc.

For these reasons we skip the problems connected with the ambiguity of transition from Lagrangians to Hamiltonians \cite{Ruzi} and we restrict our attention just to one of the specific, PU-related quantum Hamiltonian(s), viz., to the operator picked up for analysis, e.g., in Ref.~\cite{Smilgasigma},
 \begin{equation}
 \label{Hdiag}
 H  =  \left ({-\partial_x^2 + \Omega_x^2 x^2}\right )
  -  \left ({-\partial_y^2 +
 \Omega_y^2 y^2} \right ).
  \end{equation}
In a way resembling our present results, the related quantum dynamics looks pathological because even the choice of the incommensurable oscillator frequencies $\Omega_x$ and $\Omega_y$ leads to a quantum system in which the bound-state energy spectrum (i.e., in the language of mathematics, point spectrum - cf. a comment Nr. 2 in \cite{Smilgasigma}) is real but dense and unbounded,
 \begin{equation}
  \label{spec12}
  E_{nm} =
 \left(n+ \frac 12 \right) \Omega_x - \left( m + \frac 12 \right)
 \Omega_y, \qquad n,m = 0,1,2,\ldots .
 \end{equation}
In the related literature (cf., e.g., \cite{fabio,fabiob,fabioc}) several remedies of the  pathologies are proposed ranging from the use of the Wick rotation of $y \to {\rm i}y$ \cite{BeMa} up to a suitable modification of the Hamiltonian as performed already before quantization, on classical level \cite{Leach,Leachb,Leachc}.

This being said, an independent disturbing feature of the PU toy model (\ref{Hdiag}) may be seen in an abrupt occurrence of a set of spectral accumulation points in the equal-frequency limit $\Omega_x-\Omega_y \to 0$ \cite{alipu,alipub}.
The emergent new technical difficulty originates from the fact that the resulting Hamiltonian becomes non-diagonalizable, acquiring a rather peculiar canonical matrix structure of an infinite-dimensional Jordan block.

This is one the most dangerous loss-of-quantum-meaning aspects of the model.
Its serious phenomenological consequences are discussed, e.g., in the scalar field cosmology (cf. the freshmost papers \cite{portugalci,portugalcib} with further references).
In a narrower context of specific pure fourth-order conformal gravity, such a spectral discontinuity cannot be circumvented at all \cite{Maconfor}.

 \section{Summary} \label{ukonec}

The aim of this  paper is to demonstrate the variability of spectra in dependence on the number-theoretical properties of the ratio $\alpha=a/b$ of the sides of the rectangular $R$.
In particular, we show that in an arbitrarily short interval $I \subset \R$ one can find   numbers $\alpha, \beta, \gamma, \delta, \varepsilon$ such that the spectrum of $\S(\alpha)$ is empty, the spectrum of $\S(\beta)$ forms an infinite discrete set,  the spectrum $\S(\gamma) = \R$ covers the whole real line, the spectrum $\S(\delta) = \R\setminus (-a,a)$ has a ``hole'' with some positive real $a=a(\delta)$.
Finally  the spectrum  $\S(\varepsilon)$ has zero Lebesgue measure, it is uncountable, but it has a positive Hausdorff dimension which is less than 1.
It means that a small change of the dynamical parameter $\alpha=a/b$ dramatically influences  the spectrum.

Although we give just an extremely elementary example for the detailed and rigorous analysis, we would like to emphasize that our present approach proves productive in spite of lying far beyond the standard scope and methods of spectral analysis.
A nontrivial insight in the underlying physics is provided purely by the means of number theory.

From the point of view of number theory, various results on the Markov constant, i.e., $\min \left \{ m^2 \left| \frac{k}{m} - \alpha \right| \colon k,m \in \Z  \right \}$, may be found.
In the present article, we provide some insight into the behaviour of all the accumulation points of the concerned set.
Since we restrict ourselves to some special cases, naturally, the next step would be to fully investigate the properties of $\S(\alpha)$.

\section*{Acknowledgements}

EP acknowledges financial support from the Czech Science Foundation
grant 13-03538S and \v{S}S acknowledges financial support from the Czech
Science Foundation grant 13-35273P. MZ was supported by RVO61389005
and by GA\v{C}R Grant 16-22945S.


\providecommand{\newblock}{}

\end{document}